\RequirePackage{fix-cm}
\documentclass[smallextended]{svjour3}       % onecolumn (second format)
\smartqed % flush right qed marks, e.g. at end of proof
\usepackage{graphicx}
\usepackage[shortcuts]{extdash}

\usepackage{amsmath,amsfonts,amssymb,amstext,bbm}

\usepackage{natbib}
\usepackage{xcolor}

\usepackage{float}
\floatstyle{ruled}
\newfloat{algorithm}{tbp}{loa}
\floatname{algorithm}{Algorithm}

\usepackage{multirow}
\usepackage{balance} 

%\usepackage[margin=25mm]{geometry}

%%%%%%%%%%%%%% MACROS START

\newcommand{\allitems}{\mathcal{M}}
\newcommand{\allagents}{\mathcal{N}}
\newcommand{\allincomes}{\mathbb{R}_{+}^n}

\newcommand{\allincomesplus}{\mathbb{R}_{+}^{n+1}}
\newcommand{\allpreferences}{\mathbb{M}^{n,m}}
\newcommand{\allpreferencesplusm}{\mathbb{M}^{n,m+1}}
\newcommand{\allpreferencesplusn}{\mathbb{M}^{n+1,m}}

\newcommand{\incomevector}{\mathbf{t}}
\newcommand{\pricevector}{\mathbf{p}}

\newcommand{\range}[2]{\in\{#1,\dots,#2\}}

% see https://tex.stackexchange.com/a/240764/20929
\newcommand{\pick}[2]{
\hskip 3mm
^{
	{\displaystyle \textbf{\color{blue} #1}}
}_{
	{\displaystyle {\color{red!70} #2 \phantom{a^+}}}
}
}

\newcommand{\pixep}[2]{
\text{If } & #1 && \text{then } && #2 ~~~~.
}

\newcommand{\partition}[2]{\textsc{Partition}(#1,#2)}
\newcommand{\union}[2]{\textsc{Union}(#1,#2)}
\newcommand{\maxmin}[3]{\left[\frac{#2}{#3}\right]#1}

\journalname{}

\begin{document}

\title{Competitive Equilibrium For Almost All Incomes: Existence and Fairness
\thanks{A preliminary version appeared in the proceedings of AAMAS 2018. 
That version contained a severe error --- the algorithm for 4 goods and 3 agents had a bug. 
I am very grateful to an anonymous referee for a comment that lead to revealing this bug (as well as hundreds of other helpful comments).
In fact, there is an impossibility result for this case, which is presented in this paper.
Moreover, this paper extends the results to allocation of indivisible chores (bads). Additionally, it proves new fairness criteria satisfied by competitive equilibria.}
}

\author{Erel Segal-Halevi}

\institute{Erel Segal-Halevi \at
           Ariel University \\
           Ariel 40700 \\
           Israel \\
           \email{erelsgl@gmail.com}
}

\date{Received: date / Accepted: date}
% The correct dates will be entered by the editor

\maketitle

\begin{abstract}
Competitive equilibrium (CE)
is a fundamental concept in market economics. 
Its efficiency and fairness properties 
make it particularly appealing as a rule for fair allocation of resources among agents with possibly different entitlements.
However, when the resources are indivisible, a CE might not exist even when there is one resource and two agents with equal incomes.
Recently, Babaioff and Nisan and Talgam-Cohen (2017) have suggested to consider the entire space of possible incomes, and check whether there exists a \emph{competitive equilibrium for almost all income-vectors} --- all income-space except a subset of measure zero. 
They proved various existence and non-existence results, but left open the cases of four goods and three or four agents with monotonically-increasing preferences.

This paper proves non-existence in both these cases, thus completing the characterization of CE existence for almost all incomes in the domain of monotonically increasing preferences. 
Additionally, the paper provides a complete characterization of CE existence in the domain of monotonically decreasing preferences, corresponding to allocation of chores.

On the positive side, the paper proves that CE exists for almost all incomes when there are four goods and three agents with additive preferences.
The proof uses  a new tool for describing a CE, as a subgame-perfect equilibrium of a specific sequential game.  The same tool also enables substantially simpler proofs to the cases already proved by Babaioff et al.

Additionally, this paper proves several strong fairness properties that are satisfied by any CE allocation, illustrating its usefulness for fair allocation among agents with different entitlements.

\keywords{Competitive equilibrium \and Fair division \and Subgame perfect equilibrium \and Picking sequence \and Maximin share \and Indivisible goods \and Indivisible chores}
\end{abstract}

%\linenumbers

\section{Introduction}
Competitive equilibrium (henceforth CE)%
\footnote{also called \emph{Walrasian equilibrium} or \emph{price equilibrium} or \emph{market equilibrium}.} 
is a fundamental concept in economics.
It is interesting from both a descriptive and a normative viewpoint.

From a descriptive viewpoint, CE is a condition of stability in a market \citep{walras1874elements}. 
In the simple model called \emph{Fisher market} \citep{brainard2000compute,branzei2014fisher}, 
a single seller comes to the market with some items for sale. 
Several buyers with different preferences over bundles come to the market. Each buyer $i$ carries a certain amount $t_i$ of fiat money called \emph{income}%
\footnote{also called \emph{budget}.}.
CE is defined as a price-vector and a partition of the items among the buyers, such that the bundle of each buyer $i$ costs at most $t_i$ and is weakly preferred by $i$ to any other bundle that costs at most $t_i$. The existence of a CE implies that the market can attain a ``steady state'' in which the total demand (i.e., the union of the preferred bundles of all the buyers) equals the supply.

From a normative viewpoint, CE can be considered a rule for fair allocation.
In the basic setting, some items have to be allocated among several agents with equal entitlements. The goal is to find an allocation that is both \emph{weakly Pareto-efficient} (no other allocation is strictly preferred by all agents) and \emph{envy-free} (each agent weakly prefers his/her bundle to the bundle of every other agent). It is known that these two properties are satisfied by any CE allocation
with \emph{equal incomes} (CEEI)
\citep{Varian1974Equity,Moulin2004Fair}.
In general, agents may have different entitlements, where the entitlement of agent $i$ is represented by a positive number $t_i$ (for example, the items may belong to a firm of which each agent $i$ owns $t_i$ shares).
Then, the goal is to find an efficient allocation that satisfies some appropriate generalization of envy-freeness. 
For example, 
%in the special case in which the item to divide is a ``cake'' (a divisible interval) and the agents' preferences are represented by non-atomic measures, 
\citet{Reijnierse1998Finding} define an allocation to be 
\emph{$\mathbf{t}$-envy-free} (where $\mathbf{t}$ is the vector of entitlements) if each agent $i$ values his/her share as at least $t_i/t_j$ of the share of agent $j$.%
\footnote{
The agents in their model have cardinal valuation functions.
}
They show that, in any CE in which the income of each agent $i$ equals his/her entitlement $t_i$, the allocation is weakly Pareto-efficient and $\mathbf{t}$-envy-free.

When all items are divisible, 
whether the incomes are equal or different, a
CE exists under very mild assumptions, both for homogeneous divisible resources \citep{wald1935uber,Arrow1954Existence,mckenzie1954equilibrium} and for a heterogeneous resource such as a cake or a land-estate \citep{Weller1985Fair,Reijnierse1998Finding,segalhalevi2018monotonicity}.
But when the items are \emph{indivisible}, a CE
might fail to exist even in very simple cases. For example, when there is one good and two agents with equal incomes, no CE exists: if the price is larger than the income, no agent can afford the good; otherwise, both agents can afford it, but only one agent can get it
(Note that in this model money has no intrinsic value, so an agent always \emph{strictly} prefers to buy an affordable good than to remain with no goods).

The example above could make us think that we cannot enjoy the benefits of CE when there are indivisible items. But a recent paper by \citet{Babaioff2017Competitive} gives a new hope. 
They start by noticing that, in the case of two agents and one item, a CE fails to exist \emph{only when the incomes are exactly equal}. So one can say that \emph{a CE exists for almost all incomes} --- the set of income-vectors in which a CE does not exist has a (Lebesgue) measure of zero.

They then ask whether this existence-for-almost-all-incomes extends to situations with more than one item and more than two agents.
As with divisible items, the question is interesting from both a positive and normative viewpoint: whenever a CE exists, the CE allocation is weakly Pareto-efficient and has appealing fairness properties that can be seen as a discrete analogue of $\incomevector$-envy-freeness (see Section \ref{sec:fairness} below).
Hence, the existence of CE for almost all incomes would imply that almost all instances of item allocation (with different entitlements) have a fair and efficient solution.
The hope of solving such allocation problems motivates the following research question:
\begin{center}
\emph{In what cases does a CE exist for almost all incomes?}
\end{center}
Here, ``case'' is represented by the number of items, the number of agents and the class of the agents' preferences.
\citet{babaioff2019fair} considered the case of an arbitrary number of items and two agents with \emph{identical additive} preferences, where each item has a value and both agents prefer the bundle in which the sum of values is higher. They proved that in this case a CE exists for almost all incomes.%
\footnote{
They also proved that, when the two agents have additive but possibly different preferences, a CE exists for ``almost equal'' incomes (almost all incomes in a neighborhood of the vector of equal incomes). The general case, in which both the preferences and the incomes are different, remains open.
}

In parallel \citep{babaioff2019competitive}, they  considered the more general class of \emph{monotonically-increasing} preferences, when the only restriction is that each agent prefers a set to all its subsets (in this case the items are called ``goods''). 
They proved that, when there are at most 3 goods, or at most 4 goods and 2 agents, for any monotonically-increasing preference-profile, 
a CE exists for almost all income-vectors.
On the other hand, they proved a negative result for the case of 5 goods and 2 agents: they presented a profile of monotonically-increasing preferences for which the subset of the income-space where a CE does not exist has a strictly positive measure. 
They left two cases open: 4 goods and 3 agents, and 4 goods and 4 agents.

\subsection{Contributions}
The first contribution of this paper is a negative answer to the two open cases. 
For 4 goods and 3 or more agents, 
there exists a profile 
of monotonically-increasing preferences
and a positive-measure subset of the income-space, in which a CE does not exist.%
\footnote{
In the conference version \citep{SegalHalevi2018CEFAI}
I mistakenly claimed that the answer is ``Yes'' with 4 goods and 3 agents,
and becomes ``No'' only with 4 goods and 4 or more agents.
}
Thus, the existence of a CE for almost-all incomes for monotonically-increasing preferences is summarized by the following table (where stars denote new results):

\newcommand{\resultref}[1]{{\small (Sec.\ref{#1})}}

\begin{center}
\begin{tabular}{|c|c|c|c|}
\hline 
\# Goods:       & $1,2,3$  & $4$ & $5+$ \\ 
\hline 
$2$ agents:  & \multirow{3}{20mm}{Yes \resultref{sec:3items}} & Yes \resultref{sub:4items-2agents}& \multirow{3}{20mm}{No \resultref{sec:5items}} \\ 
\cline{0-0}
\cline{3-3}
$3$ agents:  &  & \textbf{No* \resultref{sub:4items-3agents}} &  \\ 
\cline{0-0}
\cline{3-3}
$4+$ agents: &  & \textbf{No* \resultref{sub:4items-4agents}} &  \\ 
\hline 
\end{tabular}
\end{center}
The effort to solve the open cases yielded a tool that may be useful in its own right:
For the cases in which a CE exists (the cases marked by ``Yes'' in the above table), 
there is a sequential game, in which each agent in turn picks an item, and at least one subgame-perfect equilibrium of the game is a CE. 

Using this tool,
the paper presents a new algorithm that finds a CE in the case of $4$ goods and $3$ agents with \emph{additive} preferences.
In contrast, the impossibility result for $4$ goods and $4$ agents holds even with additive preferences. The following table summarizes what is currently known about additive preferences:
\begin{center}
\begin{tabular}{|c|c|c|c|}
\hline 
\# Goods:       & $1,2,3$  & $4$ & $5+$ \\ 
\hline 
$2$ additive agents:  & \multirow{3}{20mm}{Yes \resultref{sec:3items}} & Yes \resultref{sub:4items-2agents} & \multirow{2}{20mm}{Open} \\ 
\cline{0-0}
\cline{3-3}
$3$ additive agents:  &  & \textbf{Yes* \resultref{sub:4items-3agents-additive}} & \\ 
\cline{0-0}
\cline{3-4}
$4+$ additive agents: &  & \multicolumn{2}{|c|}{\textbf{No* \resultref{sub:4items-4agents}}}  \\ 
\hline 
\end{tabular}
\end{center}

%Thus, if the agents know each other's preferences, the divider can induce them to find the CE even without knowing their preferences. 
%The assumption that agents know each other's preferences is quite reasonable in some settings. For example, when dividing cabinet ministries among political parties, which is a common use-case of division with unequal entitlements \citep{Brams2004Dividing}, parties often have some idea about the preferences of other parties.

The paper also considers \emph{monotonically-decreasing} preferences, when each agent prefers a set to all its supersets (the items in this case are called ``bads'' or ``chores'').
With two agents, there is an equivalence between goods-allocation and chores-allocation. It implies that both positive and negative results for allocating goods between two agents are true for chores too. With three or more agents, a CE for almost-all incomes might not exist with any number of chores, even with additive preferences. So the tables of results for chores are:
\begin{center}
\begin{tabular}{|c|c|c|}
\hline 
\# Chores \resultref{sec:chores}:   & $\leq 4$ & $5+$ 
\\ 
\hline 
$2$ agents:  & \textbf{Yes*} & \textbf{No*}
\\ 
\hline 
$3+$ agents:  &  \multicolumn{2}{|c|}{\textbf{No*}}   \\ 
\hline 
\end{tabular} 
~~
\begin{tabular}{|c|c|c|}
\hline 
\# Chores \resultref{sec:chores}:   & $\leq 4$ & $5+$ 
\\ 
\hline 
$2$ additive agents:  & \textbf{Yes*} & {Open}
\\ 
\hline 
$3+$ additive agents:  &  \multicolumn{2}{|c|}{\textbf{No*}}   \\ 
\hline 
\end{tabular} 
\end{center}

An additional contribution of this paper is an analysis of various fairness properties that are satisfied by a CE allocation --- properties that are natural generalizations of envy-freeness for agents with different entitlements \resultref{sec:fairness}.
It is shown that, in general, these new fairness properties are independent --- none of them implies the other one \resultref{app:fairness}.
An allocation satisfying these properties is called \emph{CE-fair}. Every CE allocation is  CE-fair, but the opposite is not true \resultref{sec:ce-fair-not-ce}. 

In general, a CE-fair allocation may exist even when a CE does not exist.
However, all the results on non-existence of CE in the present paper are proved using 
only the CE-fairness properties. 
Therefore, they prove that even an allocation that is only CE-fair is not guaranteed to exist for almost all incomes. 

\subsection{Paper layout}
The formal definitions are presented in Section 
\ref{sec:model}.
The fairness properties of CE are presented in 
Section \ref{sec:fairness}. 
The sequential game used to find a CE for almost all incomes is presented in Section \ref{sec:pixep}.
The settings of three, four and five goods are analyzed in Sections \ref{sec:3items}, \ref{sec:4items} and \ref{sec:5items} respectively. 
Allocation of chores is discussed in 
Section \ref{sec:chores}.
Related work is surveyed in Section \ref{sec:related} and future work ideas are presented in Section \ref{sec:future}.
Some side-results are presented in the appendices.

\section{Preliminaries}
\label{sec:model}
There is a set $\allagents{}$ of agents, with $n=|\allagents|$.
The agents are denoted $i\range{1}{n}$ or Alice, Bob, Carl, etc.
Each agent has a pre-determined positive \emph{income}.
The incomes are denoted $t_i$ (for $i\range{1}{n}$), or $a, b, c,$ etc. 
An income-vector is denoted by $\incomevector$;
the set of all possible income vectors for $n$ agents is $\allincomes$.
There is a set $\allitems$ of items with $m=|\allitems|$. Items are named $z,y,x,$ etc. 
A \emph{bundle} is a set of items. For brevity, a bundle is often written as a string of its items. E.g, $xy$ represents the bundle $\{x,y\}$.

\subsection{Preferences}
\label{sub:preferences}
Each agent $i$ has a complete and transitive  preference-relation $\succeq_i$ over bundles. 
The relation $\succ_i$ is defined as usual by: $X\succ_i Y$ iff $X\succeq_i Y$ and not $Y\succeq_i X$. 
An $n$-tuple of preference-relations is called a \emph{preference-profile}.
A preference-relation $\succeq_i$ is called:
\begin{itemize}
\item \emph{Strict} --- if for every two bundles $X\neq Y$, either $X\succ_i Y$ or $Y\succ_i X$ (no agent is indifferent between two bundles).
\item \emph{Monotonically increasing} --- if
$X \supseteq Y$ implies $X \succeq_i Y$ (an agent prefers a bundle over all its subsets).
\item \emph{Monotonically decreasing} --- if
$X \subseteq Y$ implies $X \succeq_i Y$ (an agent prefers a bundle over all its supersets).
\item \emph{Additive} --- if there exists 
a measure (an additive set function) $V_i: 2^{\allitems}\to\mathbb{R}$ such that,
for every two bundles $X,Y$:
$X\succeq_i Y$ iff 
$V_i(X) \geq V_i(Y)$.
%a function $v_i: \allitems\to\mathbb{R}$ such that, for every two bundles $X,Y$:
%$X\succ_i Y$ iff 
%$\sum_{x\in X} v_i(x) > \sum_{y\in Y} v_i(y)$.
%If $\succ_i$ is additive, then $i$ is called an \emph{additive agent}.
\end{itemize}
In most of the paper, it is assumed that the preferences and {monotonically\-/increasing}, while 
in Section \ref{sec:chores}, it is assumed that the 
preferences are {monotonically\-/decreasing}.
Strictness and additivity are assumed only in  specific results.

Given $n$ agents and $m$ items, the set of all profiles of monotonically-increasing preferences is denoted by $\allpreferences$.%

\subsection{Allocations}
An \emph{allocation} $\mathbf{X}$ is a partition of $\allitems$ among the $n$ agents.
%Note that all and only the items of $\allitems$ should be allocated; this is in contrast to the approximate-CE mechanism of \citet{budish2011combinatorial}, which may discard or add some items. 
A \emph{price-vector} $\pricevector$ is a vector of positive numbers, one number per item. The price of a bundle $Z\subseteq\allitems$ is denoted $p(Z)$ and it equals the sum of the prices of its items: $p(Z) = \sum_{z\in Z}{p_z}$.

\begin{definition}
\label{def:ce}
Given an income-vector $\incomevector$,
a \emph{competitive equilibrium (CE) w.r.t $\incomevector$} is a pair $(\pricevector,\mathbf{X})$, where $\pricevector$ is a price-vector and $\mathbf{X}$ is an allocation and the following conditions hold for every agent $i\in\allagents$.

\textbf{Condition 1.}
The price of the agent's bundle is at most the agent's income:
\begin{align*}
p(X_i)\leq t_i .
\end{align*}

\textbf{Condition 2.}
The agent's bundle is the best bundle he/she can afford with his/her income.
Formally, for every bundle $Y\subseteq \allitems$:
\begin{align*}
%Y\succ_i X_i ~~~ \implies~~~ p(Y) > t_i .
p(Y)\leq t_i ~~~ \implies~~~ Y\preceq_i X_i .
\end{align*}
\end{definition}

\subsection{Existence of CE}
This paper focuses on the following definitions of existence of CE.
\begin{definition}
\label{def:ce-exists-in-t}
Given integers $n\geq 2$ and $m\geq 1$ and an income-vector $\incomevector \in \allincomes$, \emph{a CE exists for $\incomevector$} if for all preference-profiles in $\allpreferences$, there is a price-vector $\pricevector$ and an allocation $\mathbf{X}$ such that $(\pricevector,\mathbf{X})$ is a CE given the income-vector $\incomevector$. 
\end{definition}

Throughout, the term \emph{measure} denotes the Lebesgue measure on $\mathbb{R}^n$.

\begin{definition}
\label{def:ce-exists-in-n-m}
Given integers $n\geq 2$ and $m\geq 1$, 
\emph{a CE exists for almost all incomes} 
if the set of income-vectors for which no CE exists has a zero measure.%
\footnote{
The positive results in this paper 
hold even for a slightly stronger notion of existence, by which
the set of income-vectors for which no CE exists is \emph{nowhere dense} in $\allincomes$.
}
In other words: 
for almost all $\incomevector \in \allincomes$, for all preference-profiles in $\allpreferences$, a CE exists.%
\footnote{
One could consider alternative definitions by changing the order of quantifiers. Such alternative definitions are discussed in Appendix \ref{sec:alternative-defs}
}
\end{definition}
To show existence by Definition \ref{def:ce-exists-in-n-m}, it is sufficient to show a finite set of equalities on the incomes such that a CE exists whenever none of these equalities is satisfied.

Note that Definition \ref{def:ce-exists-in-n-m} allows the CE to depend on the income-vector. I.e, for every income-vector $\incomevector\in\allincomes$ (except a set of measure zero), there may be a different CE allocation and price-vector.

In some proofs of existence, it will be convenient to assume that the preferences are strict. 
The following lemma shows that, regarding the existence of a CE, this assumption does not lose generality.

\begin{lemma}
\label{lem:strict-is-wlog}
Given integers $n\geq 2$ and $m\geq 1$ and an income-vector $\incomevector \in \allincomes$, if a CE exists for $\incomevector$ for all \emph{strict} preference-profiles in $\allpreferences$,
then a CE exists for  
$\incomevector$ for all preference-profiles in $\allpreferences$.
\end{lemma}
\begin{proof}
Consider a preference-profile that is not strict, so that for some agents $i\in\allagents$ and some pairs of bundles $X,Y$, both $X\succeq_i Y$ and $Y\succeq_i X$ hold.
For each such agent, the relation $\succeq_i$ can be made strict by arbitrarily removing either $X\succeq_i Y$ or $Y\succeq_i X$. 
By assumption, there exists a CE $(\mathbf{p},\mathbf{X})$ for the new profile. 

This $(\mathbf{p},\mathbf{X})$  is a CE for the original profile too: Condition 1 holds trivially since it does not depend on the preferences. As for Condition 2, 
if $p(Y)\leq t_i$ in the original profile then the same is true in the new profile, which implies that in the new profile $Y\preceq_i X$. Since the new profile did not add any new relations (only removed some), the relation 
$Y\preceq_i X$
must hold in the original profile too.\qed
\end{proof}

\section{Fairness Properties of Competitive Equilibria}
\label{sec:fairness}
It is well-known \citep{Varian1974Equity,bouveret2016characterizing} that, 
when all incomes are equal, 
a CE allocation is always \emph{envy-free} --- each agent weakly prefers his/her own bundle to the bundle of any other agent
($\forall i,j\in\allagents: X_i \succeq_i X_j$).
\citet{budish2011combinatorial} introduced another fairness property that is always satisfied by CE with equal incomes: the CE allocation guarantees to each agent his/her \emph{1-out-of-$n$ maximin share}, which is the best bundle an agent can get by partitioning the items into $n$ bundles and getting the worst one \citep{budish2011combinatorial}[Proposition 4].
This section explores the fairness properties of CE when the incomes are general --- not necessarily equal.

\citet{babaioff2019fair}[Proposition 3.3] have already generalized the result of \citet{budish2011combinatorial} by proving that, if an agent's income is at least a fraction $l/d$ of the sum of all incomes (where $l,d$ are positive integers), then a CE allocation guarantees him his  \emph{$l$-out-of-$d$ maximin-share}, which is the best bundle he can get by partitioning the items into $d$ bundles and getting the worst $l$ of them (see formal definition below).
This proposition supports the use of CE as a rule for fair allocation among agents with different entitlements:
Given an allocation problem with an entitlement vector $\mathbf{t}$,
construct a Fisher market with an income vector 
$\mathbf{t}$, find a CE $(\mathbf{X},\pricevector)$, and implement the allocation $\mathbf{X}$.

But CE has many more fairness implications, which are not captured by the proposition of \citet{babaioff2019fair}.
For example, as shown below, a CE guarantees that, if Alice's income is at least as large as Bob's income, then Alice does not envy Bob; if it is at least as large as half Bob's income, then Alice does not envy ``half'' of Bob's bundle; and 
if it is at least as large as twice Bob's income, 
then Alice does not envy ``twice'' of Bob's bundle. These fairness properties are formalized and proved below.

\subsection{Notation}
This section focuses on a fixed CE allocation $(\pricevector,\mathbf{X})$ and a specific agent, Alice. Her income is denoted by $a$,  her preference-relation by $\succeq$, and her bundle in the fixed CE by $A$.

For every bundle $Z\subseteq\allitems$ and integer $d\geq 1$, $\partition{Z}{d}$ denotes the set of all partitions of $Z$ into $d$ sub-bundles (some possibly empty).

For every bundle-vector $\mathbf{Y}=(Y_1,\ldots,Y_d)$ and integer $l\range{1}{d}$, $\union{\mathbf{Y}}{l}$ is the set of all unions of exactly $l$ different bundles from $\mathbf{Y}$, $Y_{j1}\cup Y_{j2} \cup \cdots \cup Y_{jl}$.

For every bundle $Z$ and integers $l,d$ with $1\leq l\leq d$, the \emph{$l$-out-of-$d$-maximin-bundle} of $Z$  is denoted $\maxmin{Z}{l}{d}$ and defined as:
\footnote{
The notation ``$l$-over-$d$'' is introduced here in order to make the properties \eqref{eq:1} and \eqref{eq:2} below visually symmetric and easy to remember.
}
\begin{align*}
\maxmin{Z}{l}{d}
~~
:= 
~~
\max_{\mathbf{Y}\in \partition{Z}{d}}
~~
\min_{W\in \union{\mathbf{Y}}{l}}
~~
W
\end{align*}
where $\max,\min$ are based on Alice's preference-relation $\succ$.

%So the statement $A\succeq \maxmin{X}{l}{d}$ is equivalent to:
%\begin{align*}
%\forall \mathbf{Y}\in \partition{X}{d}:
%~~
%\exists Z\in \union{\mathbf{Y}}{l}:
%~~
%A \succeq Z
%\end{align*}

In the above notation, Prop.  4 of \citet{budish2011combinatorial} says that:
\begin{align*}
\text{All incomes are equal}
&& \implies &&
A \succeq \maxmin{\allitems}{1}{n}
\end{align*}
Prop. 3.3 of \citet{babaioff2019fair} says that, for every two integers $1\leq l\leq d$:
\begin{align*}
a \geq {l \over d} \left(\sum_{i\in \allagents} t_i\right)
&& \implies &&
A \succeq \maxmin{\allitems}{l}{d}
\end{align*}

\begin{remark}
If Alice's preferences are strict, then  $\maxmin{Z}{l}{d}$ is a unique bundle.
Otherwise, there may be several maximizing bundles.
The results in this section are specified, for simplicity, as though $\maxmin{Z}{l}{d}$ is a unique bundle. However, it is easy to see that they actually hold for \emph{any} bundle satisfying the definition of $\maxmin{Z}{l}{d}$, since Alice is indifferent among all these bundles.
\end{remark}

\subsection{First generalized fairness property}
For every subset of agents $K\subseteq \allagents$,
if we consider only the subset of items $\left(\bigcup_{i\in K} X_i\right)$, 
with the same prices that these items have in $\pricevector$, the allocation still satisfies the two CE conditions.
%
%\footnote{I am grateful to an anonymous referee for this comment.
%}
 Hence, Prop. 3.3 of \citet{babaioff2019fair} implies that, for every subset of agents $K\subseteq \allagents$ which contains Alice,
and for every two integers $l,d$ with $1\leq l\leq d$:
\begin{align}
\label{eq:1}
\tag{P1}
	a \geq {l \over d} \left(\sum_{i\in K} t_i\right)
	&& \implies &&
	A \succeq {\maxmin{\left(\bigcup_{i\in K} X_i\right)}{l}{d}}
\end{align}
Note that \eqref{eq:1} reduces to Prop. 3.3 of \citet{babaioff2019fair} when  $K=\allagents$, 
which reduces to Prop.  4 of \citet{budish2011combinatorial} when $l=1,d=n$.
For completeness, a stand-alone proof of \eqref{eq:1} (which holds for any $K\subseteq \allagents$) is presented below.
\begin{proposition}
\label{prop:1}
Every CE allocation $\mathbf{X}$ satisfies \eqref{eq:1} for all $K\subseteq \allagents$ and integers $1\leq l\leq d$.
\end{proposition}
\begin{proof}
	Let
	$P$ be the price of the union at the right-hand side of \eqref{eq:1}, $P := p\left({\bigcup_{i\in K} X_i}\right)$.
	By CE Condition 1, for every $i$, $t_i\geq p(X_i)$. Therefore, 
	$
	\sum_{i\in K} t_i
	\geq
	P
	$.
	By the proposition assumption,
	$a \geq {l \over d} 
	\cdot 
	P
	$.
	Consider a partition $\mathbf{Y}\in\partition{\bigcup_{i\in K} X_i}{d}$.
	Order the $d$ parts in $\mathbf{Y}$ by increasing price, i.e, $p(Y_1)\leq \cdots \leq p(Y_d)$. Then, $p(Y_1) + \cdots + p(Y_l) \leq {l\over d} \cdot P$. 
	Define $W :=Y_1\cup \cdots \cup Y_l$. Then, $p(W) \leq {l\over d} \cdot P \leq a$, so Alice can afford $W$. 
	Hence, CE Condition 2 implies that $A\succeq W$.
	Since $W\in \union{\mathbf{Y}}{l}$, by definition $W\succeq {\maxmin{\left(\bigcup_{i\in K} X_i\right)}{l}{d}}$. By transitivity, 
	$A\succeq {\maxmin{\left(\bigcup_{i\in K} X_i\right)}{l}{d}}$.
	\qed
\end{proof}

Proposition \ref{prop:1}
implies some of the no-envy properties mentioned at the start of this section. 
For example, consider an agent Bob, with bundle $B$ and income $b$.
If $a\geq b$, then applying \eqref{eq:1} with $K=\{Bob\}$ and $l=d=1$ gives that $A\succeq \maxmin{B}{1}{1}$. But $\maxmin{B}{1}{1}=B$, so Alice does not envy Bob. 

Moreover, if $a \geq \frac{1}{2}b$, then \eqref{eq:1} implies that Alice weakly prefers her bundle to $\maxmin{B}{1}{2}$ --- her 1-out-of-2 maximin-share of Bob's bundle.

Moreover, if $a\geq (b+c)$, then applying \eqref{eq:1} with
$K=\{Bob,Carl\}$ and $l=d=1$ gives that Alice weakly prefers her bundle to the union of Bob and Carl's bundles. This is stronger than just saying that Alice does not envy Bob or Carl.

When all incomes are equal, 
a CE allocation guarantees Alice 
her \emph{pairwise-maximin-share (PMMS)}, as recently defined by
\citet{caragiannis2019unreasonable}.
In our notation, the PMMS of Alice w.r.t. another agent Bob is simply $\maxmin{\left( A\cup B\right)}{1}{2}$.
When the incomes are equal, $a = {1\over 2}(a+b)$, so Alice weakly prefers $A$ to her PMMS w.r.t. Bob.
Similarly, a CE allocation guarantees the \emph{groupwise-maximin-share (GMMS)}, as recently defined by \citet{barman2018groupwise}. In our notation, the GMMS of Alice w.r.t. a subset $K$ of agents is defined as 
$\maxmin{(\cup_{i\in K}X_i)}{1}{|K|}$.
When the incomes are equal, $a = {1\over |K|}\sum_{i\in K}t_i$, so Alice weakly prefers A to her GMMS w.r.t. any $K$.

%When the incomes are ``almost'' equal, i.e, the income of each agent is at least $1/(n+1)$ of the income sum, a CE allocation guarantees each agent the ``approximate maximin share'' of \citet{budish2011combinatorial}, which is defined as $\maxmin{\allitems}{1}{n+1}$. The proof is similar to the previous ones.

\begin{remark}

For each $K\subseteq \allagents$,
there are infinitely many pairs $(l,d)$ 
that satisfy the condition of \eqref{eq:1}, 
but only finitely many lead to ``interesting'' fairness guarantees (that are not implied by other guarantees). 
See Appendix \ref{sub:indep-items} for details and the accompanying technical report\footnote{{http://arxiv.org/abs/1912.08763}} for more details.

Property \eqref{eq:1}
can be instantiated with different subsets $K$ of agents. 
In particular, it can be instantiated with a subset $K$ that does not contain Alice, and with the subset $K\cup \{Alice\}$.
In general, these two instantiations are independent --- none of them implies the other. See Appendix \ref{sub:indep-p1-bundles} for details.

\end{remark}

\subsection{Second generalized fairness property}
When Alice's income is much larger than Bob's income ($a\gg b$), property \eqref{eq:1} guarantees (by taking $l=d=1$) that Alice prefers her bundle to Bob's bundle ($A \succeq B$). However, intuitively it seems that, if Alice's entitlement is much larger than Bob's entitlement, she should get much more than just ``at least $B$''. For example, if $a > 2 b$, Alice should get, in some sense, ``at least twice $B$''.

CE indeed implies such a guarantee. 
To define it formally, 
for every set of pairwise\-/disjoint bundles $Z_1,\ldots,Z_k$ and pairs of integers $(l_1,d_1),\ldots,(l_k,d_k)$, with $0\leq l_i\leq d_i$ for each $i\range{1}{k}$,  define:
	\begin{align*}
	&\left(\maxmin{Z_1}{l_1}{d_1}\right)\sqcup\cdots\sqcup \left(\maxmin{Z_k}{l_k}{d_k}\right)
	~~
	:= \\
	~~
	&~~~~~\max_{
		\mathbf{Y^1}\in \partition{Z_1}{d_1}
		,\ldots,
		\mathbf{Y^k}\in \partition{Z_k}{d_k}
	} \bigg(
	~~ \\
	&~~~~~~~~~~\min_{
		W_1 \in \union{\mathbf{Y^1}}{l_1}
		,\ldots,
		W_k \in \union{\mathbf{Y_k}}{l_k}
	} \bigg(
	~~ \\
	&~~~~~~~~~~~~~~~W_1 \cup \cdots \cup W_k
	\bigg) \bigg)
	\end{align*}
	where $\max,\min$ are based on Alice's preference-relation $\succ$.

In words, it is the best bundle that Alice can guarantee to herself by partitioning each  $Z_i$ into $d_i$ parts, letting an adversary pick $d_i-l_i$ parts from each such partition, and taking the remaining $l_i$ parts from each partition.
With this new definition it is possible to present a new fairness property.
For every $n$ pairs of integers $(l_1,d_1),\ldots,(l_n,d_n)$, with $0\leq l_i\leq d_i$ for all $i\range{1}{n}$:
\begin{align}
\label{eq:2}
\tag{P2}
a \geq 
%\sum_{i\in \allagents} {l_i\over d_i}t_i
\frac{l_1}{d_1}t_1 + \cdots + \frac{l_n}{d_n}t_n 
&& \implies &&
A \succeq 
\left(\maxmin{X_1}{l_1}{d_1}\right)\sqcup\cdots\sqcup \left(\maxmin{X_n}{l_n}{d_n}\right)
\end{align}
Note the structural similarity between the left-hand side and the right-hand side of \eqref{eq:2}: every term in the left-hand side (which is a number) corresponds to a term in the right-hand side (which is a subset of items).

If $a \geq 2b$, then $a/2 \geq b$ so $a \geq a/2 + b$. Hence, \eqref{eq:2} implies: $A\succeq \left(\maxmin{A}{1}{2}\right)\sqcup B$. I.e., Alice prefers her bundle to the best bundle she can get by dividing her current bundle into two parts, letting an adversary take away one part, and take Bob's entire bundle in addition to the other part.

\begin{proposition}
\label{prop:2}
Every CE allocation $\mathbf{X}$ satisfies \eqref{eq:2}
for all pairs of integers
$(l_1,d_1),\ldots,(l_n,d_n)$, with $0\leq l_i\leq d_i$.
\end{proposition}
\begin{proof}
By CE Condition 1, $t_i\geq p(X_i)$ for all $i\in\allagents$. Hence, by the left-hand side of \eqref{eq:2}:
\begin{align*}
a \geq 
\sum_{i\in \allagents} {l_i\over d_i}p(X_i).
\end{align*}
For every $i\range{1}{n}$, for every partition $\mathbf{Y^i}\in\partition{X_i}{d_i}$,
the $l_i$ cheaper parts cost together at most ${l_i\over d_i}p(X_i)$. Therefore, there exists a bundle $W_i\in \union{\mathbf{Y^i}}{l_i}$ whose price is at most ${l_i\over d_i}p(X_i)$. The union $W_1\cup\cdots \cup W_n$ costs at most $a$.
CE Condition 2 then implies that  $A\succeq W_1\cup\cdots \cup W_n$.\qed
\end{proof}

\begin{remark}
The two properties \eqref{eq:1} and \eqref{eq:2} are independent --- none of them implies the other one. See Appendix 
\ref{sub:indep-p1-p2} for details.
\end{remark}

\begin{remark}
If Alice's preferences are additive, then \eqref{eq:2} implies a simpler property: in any partition of $A$ into two halves, Alice prefers the better half over $B$. For example, if Alice values the items $x,y,z$ at $4,6,7$, then the allocation $(xy,z)$ violates \eqref{eq:2}, since even though Alice does not envy Bob, both her items are worse than Bob's item. Thus it cannot be a CE. On the other hand, the allocation $(xz,y)$ does not violate \eqref{eq:2}, even though Alice values her bundle at less than twice her evaluation of Bob's bundle.

In contrast, if Alice's preferences are not additive, 
then the property ``Alice prefers the best half of $A$ to $B$'' is not necessarily satisfied by a CE.
For example, suppose the incomes satisfy $2 b < a < 3 b$, there are four goods $w,x,y,z$ and the agents' preferences contain the following relations (where a comma means that the relation is irrelevant for the example):
\begin{itemize}
\item Alice: $xyz \succ \text{all other triplets} \succ wx,wy,wz \succ w \succ xy,xz,yz \succ x \succ y \succ z\succ \emptyset$;
\item Bob: $wx,wy,wz \succ w \succ xy,xz,yz \succ x \succ y \succ z\succ \emptyset$.
\end{itemize}
The allocation $(A,B) = (xyz,w)$ with prices $(p_x,p_y,p_z,p_w) = (a/3,a/3,a/3,b)$ is a CE.
But in any partition of $A$ into two subsets, Alice prefers Bob's bundle to \emph{both} subsets. 

%Moreover, any allocation in which Alice prefers the best half of $A$ to $B$ (while $B\neq\emptyset$) must give Alice the item $w$ plus one or two more items. 
%However, any such allocation is Pareto-dominated by the allocation $(xyz,w)$. Thus, the condition ``Alice prefers her best half over Bob's bundle'' is incompatible with Pareto-efficientity, hence also incompatible with CE.
\end{remark}

\begin{remark}
It is possible to generalize \eqref{eq:2} even further. Replace each income $t_i$ in the left-hand side with a sum of some subset of the incomes (e.g. $\sum_{j\in K_i}t_j$ for some subset $K_i\subseteq \allagents$), and replace the corresponding bundle $X_i$ in the right-hand side with a union of the corresponding subset of the bundles ($\cup_{j\in K_i}X_j$). The resulting property is cumbersome to write in its full generality, but it is easy to apply in particular cases using the structural similarity between the two sides of \eqref{eq:2}. For example, the following property holds in CE:
\begin{align*}
a \geq {1\over 2}(t_1+t_2) + {1\over 3}(t_3+t_4)
&&\implies&&
A \succeq \left(\maxmin{X_1\cup X_2}{1}{2}\right) \sqcup \left(\maxmin{X_3\cup X_4}{1}{3}\right)
\end{align*}
\end{remark}

To conclude: competitive equilibrium can be seen as a rule for fair allocation. It guarantees, to each agent, a multitude of fairness properties that generalize, and strictly extend, the properties of both envy-freeness and maximin-share-guarantee.

The propositions in this section motivate the following definition.
\begin{definition}
\label{def:ce-fair}
Given an income-vector $\incomevector$,
an allocation $\mathbf{X}$ is called \emph{CE-fair w.r.t. $\incomevector$}
if for every agent, $\mathbf{X}$ satisfies  \eqref{eq:1} for every subset $K\subseteq \allagents$ and integers $1\leq l\leq d$, and satisfies \eqref{eq:2} for all pairs of integers $0\leq l_i \leq d_i$.
\end{definition}
Propositions \ref{prop:1} and \ref{prop:2} imply that any CE allocation is CE-fair.
However, the opposite is not true: there exist allocations that are CE-fair and even Pareto-efficient, but there are no price-vectors with which they form a CE.
See Appendix \ref{sec:ce-fair-not-ce} for details.

\section{Picking-Sequences with Prices}
\label{sec:pixep}
This section presents a tool for constructing algorithms for finding competitive equilibria. 
Recall that, by Lemma \ref{lem:strict-is-wlog}, 
for proving existence of CE for all preference-profiles, it is sufficient to prove existence of  CE for all strict preference-profiles. Therefore, in this and the following sections it is assumed that all preference relations are strict, so that the terms ``agent $i$'s most preferred item'' or ``agent $j$'s worst pair'' are uniquely defined.
\begin{definition}
A \emph{picking-sequence} is a sequence of $m$ agent-names. It is interpreted as a sequential game in which, at each step, the current agent in the sequence may pick a single item.
\end{definition}
For example, with $m=3$  items, a possible picking-sequence is $A B A$, which denotes a game in which Alice picks an item, then Bob picks an item, then Alice receives the last remaining item.

These games are analyzed below assuming \emph{complete information}, i.e, 
the game, the preferences and the rationality of the agents are common knowledge.
The following \emph{backward induction} analysis
\citep{zermelo1913anwendung,msz2013gametheory} is used. The $m$-th picker just picks the single remaining item.
The $m-1$-th picker picks one of the two remaining items that results in a bundle he prefers (given the items he already took). 
For every possible pair of remaining items, it is known what the $m-1$-th picker is going to pick; based on this knowledge, the $m-2$-th picker picks one of the three remaining items that results in a preferred final bundle for her. 
The analysis proceeds in this way back to step $1$. 
Every sequence of picks that results from this process is called a \emph{subgame-perfect equilibrium} (SPE).
\footnote{
It is known that every SPE is also a \emph{Nash equilibrium}, and moreover, a Nash equilibrium is played in each sub-game (including unreachable ones). 
If rationality is common knowledge (i.e., all players are rational, all players know that all players are rational, etc.), then the outcome of a sequential game with perfect information is the one found by backward induction \citep{aumann1995backward}.
}

For example, consider again the game $A B A$. In step 3 Alice takes the last remaining item. In step 2 Bob chooses the single item he prefers. Suppose w.l.o.g. that for Bob: $x\succ y\succ z$, then Bob will never take $z$. 
Alice knows this, and realizes that her bundle will be either $xz$ or $yz$. So in step 1, Alice decides which of these two bundles she prefers, and picks accordingly. For example, if for Alice: $yz \succ xz$, then in the 1st step Alice picks $y$. Then, Bob picks $x$ and Alice gets $z$, and the final allocation is $yz,x$.
In this case there is a second SPE, in which Alice picks $z$ in the 1st step and gets $y$ in the 3rd step; the allocation in both SPE is the same.

\begin{definition}
(a)
A \emph{picking-sequence-with-prices} (\emph{pixep} for short) is a picking-sequence in which a price is attached to each position. The interpretation is that, whenever an agent picks an item, the corresponding price is attached to that item. 

(b) Let $S$ be a pixep
and $Q$ a subgame-perfect equilibrium in the sequential game defined by $S$.
The pair $(S,Q)$ is called an \emph{execution} of the pixep $S$.
The allocation induced by this execution is denoted by $\mathbf{X}(S,Q)$, and the induced price-vector by $\pricevector(S,Q)$.
\end{definition}
For example, with three items, a possible pixep is:
\begin{align}
\tag{*}
\pick{A}{4}~~~\pick{B}{2}~~~\pick{A}{1}
\end{align}
which means that the first item picked by Alice is priced at 4, the item picked by Bob is priced at 2, and the last item received by Alice is priced at 1. 
A pixep can be seen as a shorthand for an allocation rule; (*) is a shorthand for the rule: ``Give Alice her most preferred pair from the two pairs that contain Bob's worst item; price Alice's two items at 1 and 4; give Bob the remaining item and price it at 2''.

\subsection{Using pixeps to find a competitive equilibrium}
\label{sub:pixeps}
The main goal in designing a pixep is to find a CE.
\begin{definition}
\label{def:pixep-finds-ce}
Let $S$ be a pixep of length $m$ and $\incomevector\in\allincomes$ an income-vector.
$S$ \emph{implements a CE given income-vector $\incomevector$} if whenever the income-vector of the agents is $\incomevector$, for \emph{every} preference-profile in $\allpreferences$, there \emph{exists} a SPE $Q$ of the sequential game defined by $S$, such that the allocation $\mathbf{X}(S,Q)$ with the price-vector $\pricevector(S,Q)$ are a CE.
\end{definition}
Our goal now is to develop pixeps that implement a CE in various settings.
Although there are finitely many sequences, there are infinitely many prices, so exhaustively searching the space of all pixeps is infeasible.
Several heuristics for trimming the search-space are presented below. The heuristics are exemplified on the pixep below, where there are three agents: Alice Bob and Carl, with incomes $a, b, c$.
\begin{align}
\tag{**}
\pick{A}{p_1}~~~\pick{B}{p_2}~~~\pick{B}{p_3}~~~\pick{A}{p_4}
\end{align}

The first heuristic is related to CE Condition 1.
\begin{quote}
{(H1)}
\emph{For each agent $i$ that appears at least once in the sequence, 
the sum of prices that appear below $i$ equals $t_i$.}
\end{quote}
For example, pixep (**) satisfies (H1) iff $p_1+p_4 = a$ and $p_2+p_3 = b$.

(H1) guarantees that, in any outcome $(\pricevector,\mathbf{X})$ of the pixep, $p(X_i)=t_i$ for all $i\in\allagents$, so CE Condition 1 is satisfied. In fact, Condition 1 is weaker and allows $p(X_i)\leq t_i$.
However, as observed by \citet{babaioff2019fair}[Claim 1], 
every CE can be converted into an \emph{income-exhausting CE} (in which all agents with non-empty bundles exhaust their incomes) by increasing the price of an arbitrary item in each agent's bundle until the bundle's price equals the agent's income. So (H1) is without loss of generality --- it does not miss useful pixeps.

The next two heuristics are related to CE Condition 2.

\begin{quote}
(H2)
\emph{The sequence of prices should be decreasing, and strictly-decreasing whenever the picking-sequence switches between agents.}
\end{quote}
\begin{quote}
(H3) 
\emph{The last price must be strictly larger than the income of any agent who does not appear in the sequence.}
\end{quote}
For example, pixep (**) satisfies (H2) iff $p_1 > p_2\geq p_3 > p_4$ and (H3) iff $p_4 > c$.

The (H2) heuristic ensures that no agent can afford to switch the item he picked in his turn with a preferred item picked by another agent in a previous turn. 
This is a necessary condition: suppose (H2) is violated, and suppose all agents have the same additive preferences. Then there is a unique SPE, in which each agent in turn picks the remaining item with the highest value. If (H2) is violated, then this SPE violates CE Condition 2.%
\footnote{
For some particular preference-profiles, there may be competitive equilibria that cannot be found by pixeps with decreasing prices, but can be found by pixeps that do not have decreasing prices (see sub. \ref{sub:auto-pixep}). 
However, a pixep that should implement a CE for \emph{all} preference-profiles must have decreasing prices.
}

The (H3) heuristic ensures that an agent who is allocated an empty bundle cannot afford a non-empty bundle. It is necessary for satisfying CE Condition 2. Hence, both (H2) and (H3), too, are without loss of generality.

Henceforth, only pixeps satisfying (H1,H2,H3) are considered.

\subsection{Domination of bundles}
Given an execution $(S,Q)$, a \emph{domination} relation is defined on bundles based on the positions of items in the sequence. Given two different bundles $X\neq Y$,  say that $X$ is \emph{dominated} by $Y$ if there exists an injection $f: X\to Y$ such that, for each item $x\in X$, $f(x)$ appears (weakly) earlier than $x$ in the execution $(S,Q)$. For example, in a sequence of four items, the pair of items in positions \#1 and \#4:
\begin{itemize}
\item Is dominated by the pair of items \#1 and \#2, as well as by the triplet of items \#1 \#3 \#4;
\item Dominates the pair of items \#3 and \#4, 
as well as the singleton containing item \#1;
\item Is unrelated to the triplet of items \#2 \#3 and \#4 (none of them dominates the other).
\end{itemize}
Given an execution $(S,Q)$ and an agent $i$, the \emph{dominating bundles / dominated bundles / unrelated bundles} of $i$ are the bundles that dominate / are dominated by / are unrelated to $X_i$, respectively. CE Condition 2 can be verified for these three types of bundles separately.

\begin{lemma}\label{lem:dominating}
Suppose a pixep $S$ satisfies (H1,H2,H3). Then in any execution $(S,Q)$, no agent can afford a \emph{dominating bundle}.
\end{lemma}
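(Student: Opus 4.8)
The plan is to reduce everything to the prices attached to positions and to exploit that, by (R2), these prices form a weakly-decreasing function of position that drops \emph{strictly} at every agent-switch. Fix an execution $(I,Q)$, an agent $i$ with bundle $X_i$, and a dominating bundle $Y\neq X_i$. By the definition of domination there is an injection $f\colon X_i\to Y$ such that $f(x)$ sits weakly earlier than $x$ in $I$ for every $x\in X_i$. Because each item occupies a unique position and the price of a bundle is the sum of its items' prices, it suffices to compare $p(Y)$ and $p(X_i)=t_i$ position-by-position through $f$.

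First I would establish the weak inequality. Since $f(x)$ is weakly earlier than $x$, (R2) gives $p(f(x))\ge p(x)$ for each $x\in X_i$; summing over $x$ and using that $f$ is injective (so the items $f(x)$ are distinct and $f(X_i)\subseteq Y$) together with positivity of prices yields
\[
p(Y)\;\ge\;\sum_{x\in X_i}p(f(x))\;\ge\;\sum_{x\in X_i}p(x)\;=\;p(X_i).
\]
An agent can afford $Y$ only if $p(Y)\le p(X_i)$, so the lemma follows once I upgrade this to a strict inequality.

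The strict inequality is the heart of the argument, and it is where (R2)'s strict drop at agent-switches is indispensable; mere monotonicity would only give the weak bound above. I would argue by contradiction: if $p(Y)=p(X_i)$ then both inequalities in the display are tight, which forces (a)~$f(X_i)=Y$ (no item of $Y$ is left out, as prices are positive) and (b)~$p(f(x))=p(x)$ for every $x\in X_i$. The key observation is that, by (R2), the price strictly decreases across every boundary between maximal runs of consecutive same-agent positions, so distinct runs occupy disjoint price levels; hence two positions of equal price must lie in the same run. Applying this to~(b), $f(x)$ lies in the same run as $x$ for each $x$. Since $x\in X_i$, that run is one of agent $i$'s runs, all of whose positions are picked by $i$, so $f(x)\in X_i$. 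Thus $f(X_i)\subseteq X_i$, and combined with~(a) and $|f(X_i)|=|X_i|$ this gives $Y=f(X_i)=X_i$, contradicting $Y\neq X_i$.

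The main obstacle is exactly this last step: ruling out the tie $p(Y)=p(X_i)$. The delicate point is that within a single agent's run (R2) permits equal prices, so one cannot deduce strictness from a position improvement alone; the argument must instead use that equal-price positions belong to a common same-agent run in order to trap $f$ inside agent $i$'s own items and thereby force $Y=X_i$. Everything else---the position encoding and the weak monotone bound---is routine.
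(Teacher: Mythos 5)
Your argument for agents with a \emph{non-empty} bundle is correct, and in fact spells out the strictness step more carefully than the paper does: the paper notes that a dominating bundle $Y\neq X_i$ must contain at least one item picked by a different agent and then invokes the strict price drop of (R2) at agent switches, whereas you reach the same conclusion by contradiction, via the observation that distinct maximal same-agent runs occupy disjoint price levels. These are the same essential use of (R2), just packaged differently, and your packaging is fine.

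However, there is a genuine gap: you never treat agents whose bundle is \emph{empty}, i.e., agents who do not appear in the picking sequence at all (such as Carl in pixep (**)). For such an agent $i$, every non-empty bundle is a dominating bundle (the empty injection witnesses domination), and the lemma still claims that $i$ cannot afford any of them. Your proof breaks at its very first step for this case: the identity $p(X_i)=t_i$, and the claim that ``an agent can afford $Y$ only if $p(Y)\le p(X_i)$'', rest on (R1), which applies only to agents who appear in the sequence. For an empty bundle, $p(X_i)=0$ while the income $t_i$ is strictly positive, so affordability must be compared against $t_i$, not against $p(X_i)$; your tightness argument then only yields $p(Y)>0$, which says nothing about whether $p(Y)>t_i$. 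This case is exactly what (R3) exists for --- the last (hence, by (R2), every) price strictly exceeds the income of any agent absent from the sequence, so any non-empty bundle costs more than $t_i$ --- and the paper's proof opens with precisely this case. The fact that your proof never invokes (R3), even though it is part of the lemma's hypothesis, is the telltale sign of the omission.
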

\begin{proof}
Consider an agent $i\in\allagents$.
If $X_i$ is empty, then (H3) implies that any non-empty bundle costs more than $t_i$.

Otherwise, in any bundle dominating $X_i$, each item appears either at the same location or earlier than a corresponding item in $X_i$.
Moreover, by definition a dominating bundle is different than $X_i$ so it has at least one item selected by a different agent than $i$.
Therefore, (H2) implies that it is costs more than $p(X_i)$. (H1) implies that $p(X_i)=t_i$. Hence, $i$ cannot afford a more expensive bundle. \qed
\end{proof}

\begin{lemma}
\label{lem:dominated-max}
Suppose in a pixep $S$ agent $i$ has $k$ turns.
If in an execution $(S,Q)$, agent $i$ does not want any dominated bundle of size $k$,
then in $(S,Q)$ agent $i$ does not want any dominated bundle.
\end{lemma}
\begin{proof}
Consider a bundle $Y$ dominated by $X_i$.
By definition, there is an injection $f: Y\to X_i$ such that, for each item $y\in Y$, $f(y)$ appears (weakly) earlier than $y$ in the execution $(S,Q)$. 
This implies that $|Y|\leq |X_i| = k$.
If $|Y|=k$ then by assumption $i$ prefers $X_i$ to $Y$.
Otherwise, $|Y|<k$ so there are some $k-|Y|$ elements of $X_i$ that do not equal $f(y)$ for any $y\in Y$. 
Adding these elements to $Y$ gives a new bundle $Y'$ of size $k$, that contains $Y$ and is dominated by $X_i$.
By assumption, $i$ prefers $X_i$ to $Y'$, so by monotonicity, $i$ prefers $X_i$ to $Y$. \qed
\end{proof}

\begin{lemma}
\label{lem:dominated}
Suppose in a pixep $S$ all the turns of agent $i$ are in a single contiguous sequence. Then in any SPE $(S,Q)$, agent $i$ does not want any \emph{dominated bundle}.
\end{lemma}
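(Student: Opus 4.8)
The plan is to use contiguity to pin down $X_i$ precisely, and then to reduce every dominated bundle to a same-size competitor via monotonicity. Write agent $i$'s block of turns as the consecutive positions $k,\dots,k+r-1$ with $r=|X_i|$, and let $S$ be the set of items still unpicked at the start of turn $k$ (equivalently, the items occupying positions $k$ and later). The structural fact I would establish first is that, in the execution $(I,Q)$, \emph{the bundle $X_i$ is agent $i$'s most preferred $r$-element subset of $S$}. Everything else follows from this characterization together with the position bookkeeping supplied by the domination relation.

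I expect this structural claim to be the main obstacle, and it is exactly where contiguity is used. Because agent $i$'s turns form one uninterrupted block, no other agent picks between positions $k$ and $k+r-1$; hence during the block the only items removed from $S$ are those agent $i$ himself selects, so he is free to realise \emph{any} $r$-subset of $S$ as his final bundle. A backward-induction argument down the block then shows that the bundle he realises is his most preferred among all $r$-subsets of $S$ reachable from position $k$: at his last turn he adds his favourite remaining item, and inductively each earlier pick keeps his optimal continuation reachable. Since contiguity makes every $r$-subset reachable and preferences are strict, the realised bundle is his unique global favourite $r$-subset of $S$, namely $X_i$. I would be careful to note that this step uses only strictness and the SPE/backward-induction optimality; requirements (R1)--(R3) play no role here, since the lemma concerns ``wanting'' (the preference side of Condition 2) rather than affordability.

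With the claim in hand the remainder is routine. Let $Y$ be any bundle dominated by $X_i$, witnessed by an injection $f\colon Y\to X_i$ sending each $y$ to an item $f(y)\in X_i$ appearing weakly earlier than $y$. Injectivity gives $|Y|\le|X_i|=r$. Moreover every $f(y)$ lies in agent $i$'s block, so its position is at least $k$; as $y$ is weakly later than $f(y)$, the position of $y$ is also at least $k$, whence $y\in S$. Thus $Y\subseteq S$ and $|Y|\le r$. Now extend $Y$ to an $r$-element set $Y'\subseteq S$ (possible since $|S|\ge r$), taking $Y'=Y$ when $|Y|=r$. By monotonicity $Y\preceq_i Y'$, and by the structural claim $Y'\preceq_i X_i$. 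Since $Y\ne X_i$, at least one relation is strict: if $|Y|<r$ then $Y\subsetneq Y'$ forces $Y\prec_i Y'$ by monotonicity and strictness, while if $|Y|=r$ then $Y'=Y\ne X_i$ forces $Y'\prec_i X_i$ by optimality and strictness. Either way $Y\prec_i X_i$, i.e.\ agent $i$ does not want $Y$, which is the assertion.
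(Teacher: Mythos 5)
Your proof is correct and takes essentially the same approach as the paper's: contiguity implies that in any SPE agent $i$ ends up with his most preferred $r$-subset of the items remaining at the start of his block, and every dominated bundle is weakly worse than such a subset. The paper states this in two sentences; your write-up simply fills in the details it leaves implicit (the injection argument showing dominated bundles lie inside the remaining items and have size at most $r$, plus the strictness bookkeeping).
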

\begin{proof}
Suppose the turns of  $i$ are a contiguous sequence of length $k$. Then, the unique best strategy of $i$ is to pick the best $k$-tuple from among the items remaining on the table.
This tuple is preferred by agent $i$ to any dominated $k$-tuple. By Lemma \ref{lem:dominated-max}, it is preferred by $i$ to any dominated bundle.
\qed
\end{proof}

\begin{example}
\label{exm:dominated}
In both pixeps (*) and (**), Lemma \ref{lem:dominated} is applicable to Bob. In (*), he picks the best remaining item and obviously does not want the other item; in (**), he picks the best remaining pair and does not want any other remaining pair or singleton.
\end{example}

Lemmata \ref{lem:dominating},
\ref{lem:dominated-max}, \ref{lem:dominated} imply that, to verify that a given pixep implements a CE (by Definition \ref{def:pixep-finds-ce}), it is sufficient to check the unrelated bundles of each agent, and the dominated bundles of agents with non-contiguous turns.
Moreover, it is sufficient to check dominated bundles with the same size as the agent's bundle.
Moreover, (H3) implies that it is not needed to check any bundle for an agent who does not appear in the pixep.

\section{Warm-up: Three Goods}
\label{sec:3items}
As a warm-up, this section shows how to design pixeps implementing CE for the case of three goods and any number of agents. \citet{babaioff2019competitive}~[Proposition 3.1] already proved that in this case a CE exists for almost all incomes, but the algorithm presented here (Algorithm \ref{alg:3 items}) is more concise and contains less cases (see Appendix \ref{sec:proof-comparison} for a detailed comparison).

It can be assumed that all incomes are different, since this assumption removes from the income-space a subset of measure zero. So assume w.l.o.g. that $a>b>c>$ all other incomes.

The following paragraphs examine some picking-sequences and check if they can be made into a pixep that implements a CE.
Consider first the sequence $A A A$, giving Alice all three items. (H3) implies that the last price must be $b+\epsilon$ for some $\epsilon>0$.
(H2) implies that the second price must be at least $b+\epsilon$, so we set it to $b+\epsilon$. (H1) implies that the sum of all prices must equal $a$, so we set the first price to $a-2b-2\epsilon$. (H2) implies that $a-2b-2\epsilon \geq b+\epsilon$, which implies that $a \geq 3b+3\epsilon$.

For brevity, from now on the $\epsilon$ will be omitted from the notation: instead of $b+\epsilon$, I will write $b^+$,
instead of $a-2b-2\epsilon$ I will write $(a-2b)^{--}$, etc.
 So the above discussion can be summarized as:
\begin{align*}
\pixep{a > 3 b}
{\pick{A}{(a-2b)^{--}}~~~\pick{A}{b^+}~~~\pick{A}{b^+}}
\end{align*}
The interpretation of this notation is: 
``If $a>3b$, then there exists $\epsilon>0$ such that the sequence AAA with prices $a-2b-2\epsilon, b+\epsilon, b+\epsilon$ implements a CE''.
It is easy to find such an $\epsilon$ by solving a set of linear inequalities.%
\footnote{
The prices are clearly not unique: any set of prices $p_1,p_2,p_3$ that are strictly larger than $b$ and have a sum of at most $a$ are CE prices.
}

As a second example, consider the sequence $A A B$. (H1) implies that the last price is $b$, which is by assumption larger than $c$, so (H3) is satisfied too. (H2) implies that the second price should be more than $b$ so set it to $b^+$; (H1) implies that the first price should be $(a-b)^-$, and (H2) then implies that $a-b \geq b^{++}$. To summarize:
\begin{align*}
\pixep{a > 2 b}
{\pick{A}{(a-b)^{-}}~~~\pick{A}{b^+}~~~\pick{B}{b}}
\end{align*}
This pixep implements a CE whenever $a>2b$. 
To see this, consider Alice first. Her only dominating bundle is the set of all three items; as claimed by Lemma \ref{lem:dominating}, Alice cannot afford it. She has many dominated bundles, for example, one containing Bob's item and one of her own items; by Lemma \ref{lem:dominated}, Alice does not prefer any of these bundles to her own. There are no unrelated bundles, so the allocation is a CE for Alice. 
For Bob the situation is opposite: his only dominated bundle is the empty bundle, which he of course does not want. He has many dominating bundles, which by Lemma \ref{lem:dominating} he cannot afford. He has no unrelated bundles, so the allocation is CE for Bob too.

\begin{remark}
When $a > 3b$, the two above pixeps, based on $AAA$ and $AAB$, both implement a CE. 
This raises the question  which CE is ``better''. 
Intuitively, if Alice's income is much larger than Bob's, it seems fairer to give all items to Alice, while if the difference is not so high, it seems fairer to leave the last item to Bob. However, these intuitions are not supported by the definition of CE. 
From the point-of-view of CE, which is the one taken in this paper, both pixeps are equally reasonable.
Moreover, the allocations yielded by both pixeps satisfy all the fairness properties described in Section \ref{sec:fairness}.
The question of selecting a single CE remains for future work.
\end{remark}

As a third example, consider the sequence $A B C$. (H1) implies that the prices must be $a, b, c$, and by assumption $a>b>c>$ all other incomes, so (H2,H3) are satisfied. Lemma \ref{lem:dominated} is applicable for all three agents, so to verify CE it is sufficient to consider unrelated bundles. 
Carl and Bob do not have unrelated bundles: for Carl, all bundles are dominating, except the empty bundle which is dominated.
For Bob, all bundles are dominating, except the empty bundle and Carl's bundle which are dominated.
For Alice, the empty bundle and all singleton bundles are dominated, the bundle of all items is dominating, and the pairs containing her own item are dominating too. Hence, Alice has a single unrelated bundle --- the one containing items \#2 and \#3. To ensure that Alice cannot afford that bundle, it is sufficient to require that $a < b+c$:
\begin{align*}
\pixep{a < b + c}
{\pick{A}{a}~~~\pick{B}{b}~~~\pick{C}{c}}
\end{align*}

Finally, consider the sequence $A B A$. To satisfy (H1,H3), set the prices to $(a-c)^-, b, c^+$. (H2) then implies that $(a-c)^- > b$, so:
\begin{align*}
\pixep{a > b + c}
{\pick{A}{(a-c)^-}~~~\pick{B}{b}~~~\pick{A}{c^+}}
\end{align*}
Here no agent has unrelated bundles. Lemma \ref{lem:dominated} implies that Bob does not want a dominated bundle.
It remains to verify that Alice does not want a dominated bundle of size $2$. 
The only such bundle is the pair of the items picked last. 
Suppose w.l.o.g. that Bob's ranking of singletons is: $x \succ y \succ z$.
Then Bob never picks $z$ so Alice gets either $xz$ or $yz$. If for Alice $xz\succ yz$ then she certainly picks $x$ first, so she prefers her bundle over the dominated bundle $yz$.
If for Alice $yz\succ xz$ then she has two options: pick $y$ first and get $z$ last, or pick $z$ first and get $y$ last. Both options lead to the same final allocation. In the first option she prefers her bundle to the dominated pair $xz$, while in the second option she might prefer the dominated pair $xy$.
However, 
to prove the {existence} of a CE, it is sufficient to prove that \emph{there exists} a SPE in which the allocation satisfies the CE conditions, so it can be assumed that Alice picks the first option.%
\footnote{
\iffalse
If the preferences are strict, then 
the selection of SPE affects only the price and not the allocation --- the allocation is the same in all SPEs
(see Lemma \ref{lem:spe} in Appendix \ref{sec:spe}).
If the preferences are not strict, then different SPEs may lead to different allocations. 
\fi
\iffalse
For proving the existence of a CE, one can assume w.l.o.g. that the preferences are strict (see Lemma \ref{lem:strict-is-wlog}).
\fi
Currently I do not have an algorithm for finding the SPE that leads to a CE allocation.
When the preferences are strict, 
the choice of SPE does not affect the allocation; see Appendix \ref{sec:spe}.
}

The conditions of the last two pixeps, $ABC$ and $ABA$, cover all the income space except the hyperplane $a = b+c$, which has a measure of zero in $\allincomes$. Thus, the following theorem is proved:
\begin{theorem}
\label{thm:3-items}
When there are $m=3$ goods, 
for almost all incomes, 
for all monotonically\-/increasing preference\-/profiles, 
a CE exists and can be found by Algorithm \ref{alg:3 items}.
\end{theorem}

\begin{algorithm}
\begin{align*}
\pixep{a > b+c}
{\pick{A}{(a-c)^-}~~~\pick{B}{b}~~~~~~~~~\pick{A}{c^+}}
\\[3mm]
\pixep{a < b+c}
{\pick{A}{a}~~~~~~~~~\pick{B}{b}~~~~~~~~~\pick{C}{c}}
\end{align*}
\caption{
\small
\label{alg:3 items}
Competitive Equilibrium with $m=3$ items.
The algorithm works in almost all the income space, i.e, for all income-vectors $(a,b,c,\dots)$ in which $a>b>c>\ldots$ and $a+b\neq c$.
}
\end{algorithm}

% The above discussion suggests a systematic way to determine the existence of a price-equilibrium: scan all possible pixeqs (there are only finitely many), for each pixeq determine a price-vector that obeys the conditions, and determine the conditions required so that no agent can afford an unrelated bundle. Then, make sure that the conditions cover almost all the income space.

\section{Four Goods}
\label{sec:4items}
\subsection{Two agents}
\label{sub:4items-2agents}

\begin{algorithm}
\begin{enumerate}
\item If ~~ ${a > 2 b}$ ~~ then ~~ 
${\pick{A}{(a-b)^{--}}~~~\pick{A}{b^+}~~~\pick{B}{b}~~~\pick{A}{0^+}}$~~~~.
\\[3mm]
\item If ~~ ${a < 2 b}$ ~~ then ~~ 
play the sequential game below:
\begin{align*}
&\text{Alice may choose:} && {\pick{A}{a^{--}}~~~\pick{B}{b^{-}}~~~\pick{A}{0^{++}}~~~\pick{B}{0^+}}~~~~.
\\[3mm]
&\text{Else, Bob may choose:} && {\pick{B}{b}~~~\pick{A}{b^{-}}~~~\pick{A}{(a-b)^-}~~~\pick{A}{0^{++}}}~~~~.
\\[3mm]
&\text{Else:} && {\pick{A}{a/2}~~~\pick{A}{a/2}~~~\pick{B}{b/2}~~~\pick{B}{b/2}}~~~~.
\end{align*}
\end{enumerate}

\caption{
\small
\label{alg:4 items 2 agents}
Competitive Equilibrium with $m=4$ items and $n=2$ agents.
Works for all income-vectors $(a,b)$ with $a>b, a\neq 2 b$.
}
\end{algorithm}
In this section there are $m=4$ items. 
Initially we assume there are only two agents --- Alice and Bob --- with incomes $a>b$. 
\citet{babaioff2019competitive}~[Proposition 3.2] 
already proved that in this case CE exists in almost all income space, but the algorithm presented here is shorter.

Algorithm \ref{alg:4 items 2 agents} uses a higher-level sequential game --- a game in which agents may choose between different pixeps. 
Definition \ref{def:pixep-finds-ce} naturally extends to this higher-level game: 
it implements a CE given income-vector $\incomevector$ if whenever the income-vector of the agents is $\incomevector$, for \emph{every} preference-profile in $\allpreferences$, there \emph{exists} a SPE $Q$ of this game such that the resulting allocation with the resulting price-vector are a CE.

\begin{theorem}
\label{thm:2-agents-4-goods}
When there are $n=2$ agents and $m=4$ goods, 
for almost all incomes,
for all monotonically\-/increasing preference\-/profiles, 
a CE exists 
and can be found by Algorithm \ref{alg:4 items 2 agents}.
\end{theorem}
\begin{proof}
The case $a > 2b$ is handled by the sequence $AABA$:
\begin{align*}
\pixep{a > 2 b}
{\pick{A}{(a-b)^{--}}~~~\pick{A}{b^+}~~~\pick{B}{b}~~~\pick{A}{0^+}}
\end{align*}
All three requirements on the price-sequence (H1,H2,H3) are clearly satisfied. No agent has any unrelated bundles. By lemma \ref{lem:dominated}, Bob does not want a dominated bundle.
It only remains to check that Alice does not want a dominated bundle of size 3. This can be verified similarly to the case $A B A$ in the previous section. There exists a SPE in which Bob's worst item is picked (by Alice) at the last step. Alice receives the best of the three triplets that contain this item, so she prefers it to any dominated triplet.

The case $a < 2b$ is more complicated. 
It requires letting agents choose between different pixeps. This leads to the following  three-step sequential game.

Step \#1: Alice may choose the following pixep based on $ABAB$:
\begin{align*}
{\pick{A}{a^{--}}~~~\pick{B}{b^{-}}~~~\pick{A}{0^{++}}~~~\pick{B}{0^+}}~~~~.
\end{align*}

Step \#2: If Alice does not choose $ABAB$, then Bob may choose the following pixep based on $BAAA$:
\begin{align*}
{\pick{B}{b}~~~\pick{A}{b^{-}}~~~\pick{A}{(a-b)^{-}}~~~\pick{A}{0^{++}}}~~~~.
\end{align*}

Step \#3: If Bob does not choose $BAAA$, then we play the following pixep based on $AABB$:
\begin{align*}
{\pick{A}{a/2}~~~\pick{A}{a/2}~~~\pick{B}{b/2}~~~\pick{B}{b/2}}~~~~.
\end{align*}
To understand the design of this algorithm, it is useful to think of a special case in which the preferences are additive and identical; then, a bundle given to one agent should be more expensive than a bundle given to the other agent iff it is more valuable. Suppose $w\succ x\succ y \succ z$, so the items are picked in the order $w,x,y,z$. Then step 1 handles the case $w\succ xz$, step 2 handles the case $xz\succ w\succ yz$, and step 3 handles the remaining case $yz \succ w$.

Proving that the same algorithm works for general preferences requires analysis using backward induction. 
Rename the items such that Bob's best item is $w$, and for Alice: $wx \succ wy \succ wz$.

In Step 3, Alice gets her best pair and Bob gets its complement. By Lemma \ref{lem:dominated} no agent wants a dominated bundle. 
Alice's unrelated bundles are all triplets, and Alice cannot afford even the cheapest of them, since $a/2+b > a$.
Bob's unrelated bundles are all singletons, 
and Bob can afford them. But, if he wanted a singleton, he could choose $BAAA$ in the previous step and get his best singleton ($w$). So in step 3 it is safe to assume that Bob does not want an unrelated bundle.

In Step 2, Bob has to choose between $w$ (his preferred singleton), and the complement to Alice's best pair. There are three cases:
(a) If Alice's best pair is $xy$ or $xz$ or $yz$, then the complement contains $w$ so Bob certainly prefers $AABB$. 
(b) Otherwise, Alice's best pair is $wx$ and its complement is $yz$; if for Bob $yz\succ w$, then again he prefers $AABB$.
(c) Only if Alice's best pair is $wx$ and for Bob $w\succ yz$, does Bob choose $BAAA$.
In the latter case, Bob's bundle is $w$. There exists a SPE in which Alice chooses her three items in the order: $x,y,z$. Then, Bob cannot afford $xy$ or $xz$ since they cost more than $b$. The only unrelated bundle he can afford is $yz$. However, in case (c) Bob prefers $w$ to $yz$, so Bob does not want any unrelated bundle.
Alice can afford only three unrelated bundles --- $w$ and $wy$ and $wz$. However, if she wants any of these, she could choose in the previous step $ABAB$ and pick $w$ first; this would guarantee her at least $wy$. So there exists a SPE in which, in step 2, Alice does not want any unrelated bundle that she can afford.

In Step 1, in cases (a-b) above, Alice never chooses $ABAB$, since she can get her best pair by waiting for Step 3.  In case (c), Alice chooses $ABAB$ iff she prefers the pair she is going to get over the triplet $xyz$; this pair must contain $w$, so it can be  assumed that if $ABAB$ is played, Alice picks $w$ first.%
\footnote{
If Alice picks $w$ first, then her bundle is either $w y$ (if Bob picks $x$ first) or $w x$ (otherwise).
If Alice picks $y$ or $z$ first, then her bundle cannot be better for her than $w y$, so we can assume she does not do this.
If Alice picks $x$ first, then Bob picks $w$ second, because in case (c) he prefers $w$ to $y z$; but 
then Alice's bundle is contained in $xyz$, so it is worse for her than $w y$.
}
Now, Bob has only one unrelated bundle $w$, which he cannot afford since $a>b$.
Alice has one unrelated bundle $xyz$, which by assumption she does not want.

It remains to check the dominated bundles in the case $ABAB$.
Alice receives a pair that she prefers over $xyz$, so she prefers it to every pair contained in $xyz$.
Her only other dominated pair contains $w$ and the item picked last. But if Alice wanted this pair, she could have picked this item in her before-last turn.
From Bob's point of view, the relevant sequence is $BAB$, which is analogous to the sequence $ABA$ analyzed in the previous section. Therefore, Bob too does not want any dominated pair.
\qed
\end{proof}

\subsection{Three agents with general monotonically-increasing preferences}
\label{sub:4items-3agents}
In this subsection there are four items and three agents --- Alice Bob and Carl --- with incomes $a>b>c$.
%This case was left open in  \cite{Babaioff2017Competitive}.

In the conference version of this paper \citep{SegalHalevi2018CEFAI}, 
I presented an algorithm using pixeps
for this case, and claimed that it implements a CE for almost all incomes.
This was a mistake: the algorithm works only when the agents have additive preferences (see subsection 
\ref{sub:4items-3agents-additive} below).
The error was in one pixep that was supposed to handle a specific range of incomes;
by focusing on this specific range, I  found a specific preference-profile for which no CE exists.%
\footnote{
To decrease the chances
that this example, too, has a mistake,
I wrote a Python program
that, given preferences and incomes,
exhaustively checks all allocations, and for each allocation, looks for CE prices using linear programming. 
I used the program to verify this and the other non-existence results in this paper.
The code is available here:
https://github.com/erelsgl/indivisible-competitive-equilibrium
}

\begin{theorem}
\label{thm:3-agents-4-goods}
When there are $n= 3$ agents and $m\geq 4$ goods, 
there is a monotonically-increasing preference-profile and a positive-measure subset of the income-space, where no allocation is CE-fair, hence no CE exists.
\end{theorem}
\begin{proof}
Consider first the case of $m=4$ goods.
Call the agents Alice Bob and Carl and denote their incomes by $a,b,c$. 
Consider the income subspace defined by the following  inequalities:
\begin{align*}
a>b>c ~~~ \text{and} ~~~ 
4c > a+b > 2b > a > b+c > 2c
\end{align*}
It has a positive measure since it is open and contains e.g. the point $(20, 11, 8)$.

There are four items denoted by: $w, x, y, z$.
The agents' preferences contain the following relations (Carl's preferences are irrelevant for the proof):
\begin{itemize}
\item     Alice:   $\text{triplets} \succ wx \succ yz\succ wy \succ xz\succ wz \succ xy  \succ \text{singletons} \succ \emptyset$.
\item     Bob:    $w \succ x \succ y \succ z \succ \emptyset$.
\end{itemize}

Suppose by contradiction that a CE-fair allocation exists. Applying \eqref{eq:1} to Carl gives:
\begin{align*}
c > {1\over 4}(a+b)
&&\implies&&
C\succeq \maxmin{\big(A\cup B\big)}{1}{4}
\end{align*}
so Carl must get at least one item.

CE-fairness implies that Bob must not envy Carl, so he must get at least one item too.

Applying \eqref{eq:1} to Alice gives:
\begin{align*}
a > b+c
&&\implies&&
A\succeq B\cup C
\end{align*}
so Alice must get at least two items, and these must be $wx$ or $wy$ or $wz$ (since Alice prefers these three pairs to their complements).
So Alice must get exactly 2 items, Bob exactly 1 and Carl exactly 1.

If Alice gets $wx$, then Bob gets $y$ or $z$. In both cases 
\eqref{eq:1} is violated for him, since:
\begin{align*}
b > a/2
&& \text{while} && 
B \prec \maxmin{A}{1}{2} = x.
\end{align*}

If Alice gets $wy$ or $wz$, then \eqref{eq:1} is violated for her, since:
\begin{align*}
a > {1\over 2}(a+b+c)
&& \text{while} && 
A \prec \maxmin{(A\cup B\cup C)}{1}{2} = \min(wx,yz)=yz.
\end{align*}
\qed
\end{proof}

Consider now the case $m>4$.
Intuitively, more items can only make it harder to find a CE.  The following lemma formalizes this intuition.
\begin{lemma}
\label{lem:more-items}
Let $n\geq 2$ and $m\geq 2$ be integers and $\incomevector\in \allincomes$ an income-vector.

(a) If there exists a CE w.r.t. $\incomevector$ for all preference-profiles in $\allpreferencesplusm$%
%(for $m+1$ items)
,
then there exists a CE w.r.t. $\incomevector$ for all preference-profiles in $\allpreferences$% 
%(for $m$ items)
.

(b) If there exists a CE-fair allocation w.r.t. $\incomevector$ for all preference-profiles in $\allpreferencesplusm$%
%(for $m+1$ items)
,
then there exists a CE-fair allocation w.r.t. $\incomevector$ for all preference-profiles in $\allpreferences$%
% (for $m$ items)
 .
\end{lemma}
\begin{proof}
Given a preference-profile with $m$ items, construct a preference-profile with $m+1$ items by adding a ``low-value item'', denoted by $z$.
``Low-value'' means that, 
for every agent $i\in\allagents$,
and for every two bundles $X$ and $Y$ which do not contain $z$,
$Y \succeq_i X\cup \{z\}$
in the new profile
if and only if $Y\succeq_i X$
in the original profile.
In other words, adding the new item $z$ does not affect the relative order between bundles in the original profile.

\textbf{Part (a).}
By assumption, there exists a CE in the new profile. Denote it by $(\pricevector,\mathbf{X})$.
Suppose w.l.o.g. that in this CE item $z$ is given to agent $1$. Define a CE $(\pricevector,\mathbf{Y})$ in the original profile, where $\mathbf{Y}$ is identical to $\mathbf{X}$ except that $z$ is removed, i.e.:
$Y_j := X_j$ for all $j\neq 1$, and $Y_1 := X_1\setminus \{z\}$.
%\item The price-vector $q$ is identical to $p$ except that the prices of Alice's items are increased, i.e.: $q_y := p_y$ for all $y \not\in Y_A$, and $q_y := p_y$ for all $y\in Y_A$.
Then $(\pricevector,\mathbf{Y})$ is a CE: All agents except $1$ have the same bundle and see the same allocation and prices, so both CE conditions hold for them.
As for agent 1, 
the first CE condition holds since $p(Y_1) = p(X_1)-p(z) \leq p(X_1)\leq t_1$.
To check the second CE condition,
let $W$ be a bundle with 
$W \succ_1 Y_1$.
By the definition of the low-value item $z$, 
this implies that 
$W \succ_1 Y_1 \cup \{z\} = X_1$.
Since $(\pricevector,\mathbf{X})$ is a CE in the new profile, this 
implies that $p(W) > t_1$.

\textbf{Part (b)}.
By assumption, there exists a CE-fair allocation $\mathbf{X}$ in the new profile. Let $\mathbf{Y}$ be an allocation derived from $\mathbf{X}$ by removing $z$ from the bundle of the agent who received it. Then $\mathbf{Y}$  is CE-fair:
in both  \eqref{eq:1} and \eqref{eq:2}, the left-hand sides (the conditions on the incomes) do not change since the income-vector is the same. 
In the right-hand sides, removing $z$ from either side of the inequality does not affect the maximin-share bundles and the order relation between the bundles, since $z$ is a low-value item. 
\qed
\end{proof}

Combining Lemma \ref{lem:more-items} with the proof for $m=4$ completes the proof of Theorem \ref{thm:3-agents-4-goods} for any $m\geq 4$.

\subsection{Three agents with additive preferences}
\label{sub:4items-3agents-additive}
This subsection presents an algorithm, based on pixeps, 
for finding a CE for almost all incomes,
when there are $m=4$ goods and $n=3$ agents with additive preferences. 
%In the conference version of this paper \citep{SegalHalevi2018CEFAI}, I mistakenly claimed that this algorithm works for general preferences. In fact, it works only for additive preferences. 
Its proof of correctness uses the following lemma, which extends Lemma \ref{lem:dominated}.
\begin{lemma}
\label{lem:dominated-gap}
Let $i$ be an agent with additive preferences.
Suppose in a pixep $S$,
the turns of agent $i$ are arranged in a contiguous sequence, after it a single turn of another agent $j$, and then another turn of agent $i$. 
Then there exists a SPE $(S,Q)$ in which agent $i$ does not want any dominated bundle.
\end{lemma}
\begin{proof}
Suppose the contiguous sequence is of length $k-1$, so all in all agent $i$ is entitled to $k$ items. 
Let's say that agent $i$ \emph{picks in order}, if in each turn he picks his most valuable remaining item.
By additivity, picking in order guarantees the agent at least his 2nd-best $k$-tuple from the set of items remaining on the table when his first turn arrives. Therefore only the two following cases are possible:

Case \#1: Agent $i$ has a strategy by which, in SPE, he gets the best $k$-tuple of remaining items. 
Then there exists a SPE in which agent $i$ plays this strategy. Since any dominated bundle contains at most $k$ remaining items, agent $i$ does not want any such bundle.

Case \#2: Agent $i$ does not have such a strategy. 
Then, there exists a SPE in which he picks in order and gets the 2nd-best $k$-tuple of remaining items.
When picking in order, all dominated bundles have a smaller value for agent $i$.
 \qed
\end{proof}

\begin{remark}
In Lemma \ref{lem:dominated-gap}
there is a ``gap'' of $1$ in the agent's turns.
The lemma no longer holds with a gap of $2$.
For example, consider three agents with additive preferences satisfying:
\begin{itemize}
\item     Alice:  $w \succ x \succ y \succ z$,
\item     Bob:~~  $x \succ w \succ y \succ z$,
\item     Carl:~  $y \succ z\succ x\succ w$.
\end{itemize}
Consider the picking-sequence $ABCA$. 
There is only one SPE: the items are picked in the order $y x z w$, and Alice's bundle is $w y$
(if Alice picks any other item instead of $y$ in her first turn, she gets $w z$ or $x z$, both of which are worse for her than $w y$). 
Now Alice prefers the dominated bundle $w x$. \qed
\end{remark}

\begin{theorem}
\label{thm:3-agents-4-goods-additive}
When there are $m=4$ goods and $n=3$ agents,
for almost all incomes,
for all \emph{additive monotonically-increasing} preference\-/profiles, 
a CE exists 
and can be found by Algorithm \ref{alg:4 items 3 agents}.
\end{theorem}
\begin{algorithm}
\begin{enumerate}
\item If $a > 2 b + c$ \hskip 1mm then \hskip 1mm
${
\pick{A}{(a-b-c)^{--}}
\pick{A}{b^+}~~~\pick{B}{b}~~~\pick{A}{c^+}}$~~~~.
\vskip 5mm
\item If ~~ $2 b + c > a > 2 b$ ~~ then ~~~~~~~~~~ 
${\pick{A}{(a-b)^{-}}~~~\pick{A}{b^+}~~~\pick{B}{b}~~~\pick{C}{c}}$~~~~.
\vskip 5mm
\item {\scriptsize If  $2 b > a > b + c ~\&~ a + c > 2 b$ then }
${\pick{A}{b^{+}}~~~\pick{B}{b}~~~\pick{A}{(a-b)^{-}}~~~\pick{C}{c}}$~~~~.
\vskip 5mm
\item If ~~ $2 b > a > b + c$ ~ and ~ $2 b > a + c$ (implies $b > 2 c, a> 3 c$) ~~ then:
\begin{align*}
&\text{Alice may choose:} && {\pick{A}{(a-c)^{--}}~~~\pick{B}{(b-c)^{-}}~~~\pick{A}{c^{++}}~~~\pick{B}{c^+}}~~~~.
\\~\\
&\text{Else, Bob may choose:} && {\pick{B}{b}~~~\pick{A}{(a-2 p)^{--}}~~~\pick{A}{p^+}~~~\pick{A}{p^+}}~~~~.
\\
&  && \text{\color{gray} where $p:=\max~(c,~(a-b)/2)$}
\\~\\
&\text{Else:} && {\pick{A}{a/2}~~~\pick{A}{a/2}~~~\pick{B}{b/2}~~~\pick{B}{b/2}}~~~~.
\end{align*}
\vskip 5mm
\item If ~~ $b + c > a > 2 c$ ~and~ $2 c > b$ ~~ then ~~   play:
\begin{align*}
& \text{Alice may choose:} &&
{\pick{A}{a}~~~\pick{B}{b^-}~~~\pick{C}{c}~~~\pick{B}{0^+}}~~~~.
\\[3mm]
& \text{Else:} &&
{\pick{B}{b}~~~\pick{A}{(a-c)^-}~~~\pick{A}{c^+}~~~\pick{C}{c}}~~~~.
\end{align*}
\vskip 5mm
\item If ~~ $b + c > a > 2 c$ and $b > 2 c$ ~~ then ~~ play:
\begin{align*}
&\text{Alice may choose:} &&
{\pick{A}{a}~~~\pick{B}{b/2}~~~\pick{B}{b/2}~~~\pick{C}{c}}~~~~.
\\~\\
&\text{Else, Bob may choose:} && {\pick{A}{(a-c)^{--}}~~~\pick{B}{(b-c)^{-}}~~~\pick{A}{c^{++}}~~~\pick{B}{c^+}}~~~~.
\\~\\
&\text{Else:} && 
 {\pick{B}{b}~~~\pick{A}{(a-c)^{-}}~~~\pick{A}{c^+}~~~\pick{C}{c}}~~~~.
\end{align*}
\vskip 5mm
\item If ~~ $2 c > a$ ~~ then ~~   play the sequential game below:
\begin{align*}
& \text{Alice may choose:} &&
{\pick{A}{a}~~~\pick{B}{b^-}~~~\pick{C}{c}~~~\pick{B}{0^+}}~~~~.
\\~\\
& \text{Else:} &&
{\pick{B}{b}~~~\pick{A}{c^{+}}~~~\pick{C}{c}~~~\pick{A}{(a-c)^-}}~~~~.
\end{align*}
\end{enumerate}
\caption{
\small
\label{alg:4 items 3 agents}
Implementing Competitive Equilibrium with $m=4$ items and $n=3$ additive agents with incomes $a>b>c$.
}
\end{algorithm}
\begin{proof}
%The algorithm for three agents is based on principles similar to Algorithm \ref{alg:4 items 2 agents} (for two agents), although it is much longer.
First, it is easy to check that all pixeps satisfy (H1,H2,H3); in particular, all price-sequences are decreasing.
Moreover, for all agents, the turns are either contiguous or contiguous with a single gap. 
Hence, by Lemmata \ref{lem:dominated} and \ref{lem:dominated-gap}, no agent wants a dominated bundle.
Moreover, Carl has no unrelated bundles.
It only remains to check the unrelated bundles of Alice and Bob.

\textbf{Ranges 1 and 2 and 3} are straightforward: no agent can afford an unrelated bundle.%
\footnote{
The algorithm fails in 
range 3 
when the preferences are not additive.
For example, in the scenario used to prove Theorem \ref{thm:3-agents-4-goods},
in SPE 
Alice picks $w$, Bob picks $x$, Alice picks $y$ and Carl picks $z$, and Alice prefers the dominated bundle $yz$.
}

\textbf{Range 4} is analyzed similarly to range 2 in Algorithm \ref{alg:4 items 2 agents}. 
Note that the picking-sequences in both ranges are the same --- only the prices are different.
This means that the strategic behavior of the agents is exactly the same in both ranges.
The reason is that, in the Fisher market model, 
money is used only to purchase items in the market, and has no value outside the market.
Therefore, the preferences of the agents depend only on the bundles that they receive, and not on the prices.
%So an agent who gets an item, does not care whether the item was cheap or expensive. The agents care only about the final bundle that they receive.
%Hence their strategic behavior is the same.

\textbf{In Range 5}, 
In both steps, Bob cannot afford any unrelated bundle. 
To analyze Alice's strategy, rename the items such that Bob's best item is $w$ and Alice's best pair without $w$ is $xy$. Then Alice chooses $ABCB$ iff she prefers $w$ to $xy$. 

If she chooses $ABCB$, she gets $w$, does not want $xy$ (or any other pair without $w$), and cannot afford a triplet.
 
If she chooses $BAAC$, she gets $xy$, does not want $w$, and cannot afford any unrelated pair.

To analyze \textbf{Range 6}, rename the items such that
%Bob's best item is $w$, 
%Alice's best pair without $w$ is $xy$ (hence for Alice $xy>z$), 
%and for Bob $xz > yz$. 
Bob's best item is $w$, 
Alice's best items besides $w$ are $x,y$,
and for Bob $x>y$.

In the last step $BAAC$, 
Bob picks $w$ and Alice picks $xy$ and the allocation is $(xy,w,z)$.
Alice can afford only one unrelated bundle --- the singleton $w$.
But if she wanted it, she could have chosen it in the first step $ABBC$.
Bob can afford only one unrelated bundle --- the cheaper of $xz,yz$. There exists a SPE where Alice picks $x$ before $y$; then, Bob can afford only $yz$. 
However, we get to $BAAC$ only if Bob prefers $w$ to the pair he could get in $ABAB$, which must be one of $xy,xz,yz$ and not the worst one. 
So by choosing $ABAB$ Bob could get at least $yz$. But he chose otherwise, hence he prefers $w$.

In the second step $ABAB$, 
Alice cannot afford even her cheapest unrelated bundle (triplet) since $b+c>a$.
Bob can afford only one unrelated bundle --- a singleton.
But Bob chooses $ABAB$ only if he prefers the pair he is going to get to his best singleton $w$.

For analyzing the first step $ABBC$, 
note that in $ABBC$ Alice gets a singleton, while in $ABAB$ Alice can get the same singleton plus another item.
Therefore, Alice chooses $ABBC$ only if she knows that Bob would choose $BAAC$.
By choosing $ABBC$, she indicates that she prefers her best item over the best pair that does not contain this item.
Therefore, she does not want any unrelated pair. Additionally, she cannot afford any triplet. 
Bob cannot afford even his cheapest unrelated bundle (singleton).

\textbf{In Range 7}, in both steps, Bob cannot afford any unrelated bundle. 

To analyze Alice's strategy, 
rename the items such that Bob's best item is $w$, and Carl's worst item besides $w$ is $z$.

In the last step, Bob picks $w$ and Carl picks $x$ or $y$, so Alice can get the best of $xz,yz$.
She can afford only one unrelated singleton ($w$), but if she wants it she can choose $ABCB$ in the first step.

In the first step necessarily Alice prefers $w$ to $best(xz,yz)$, so $w$ is her best singleton and she picks it.
Since $b+c>a$, 
Alice can afford only two unrelated pairs --- the ones containing the last item.
There exists a SPE in which this last item is $z$. So Alice can afford only $xz,yz$.
But if Alice wanted one of these, she could have waited to the last step.

Finally, it is easy to check that the seven ranges handled by Algorithm \ref{alg:4 items 3 agents} cover all the income space except a finite number of hyperplanes (corresponding to the equalities $a=b,b=c,a=2b+c,a=2b,a=b+c,a+c=2b,a=2c,2c=b$). Therefore, a CE exists for almost all incomes. \qed
\end{proof}

\subsection{Four or more agents}
\label{sub:4items-4agents}

In this subsection there are four goods and four (or more) agents. 

Intuitively, 
one would expect an analogue of Lemma \ref{lem:more-items}, which would say that a CE is less likely to exist when there are more agents. Such a lemma, combined with the impossibility result of Theorem \ref{thm:3-agents-4-goods}, would imply 
impossibility for any $m\geq 4$ and $n\geq 4$. 

\ifdefined\LemmaMoreAgents
However, so far I could prove such a lemma
only for the (simple) special case in which there are less items than agents.
\begin{lemma}
\label{lem:more-agents}
Let $n\geq 2$ and $m\geq 2$ be integers \emph{with $m\leq n$}.

(a)
If a CE exists for $n+1$ agents and $m$ items,
for almost-all
income-vectors in $\allincomesplus$ and all preference-profiles in $\allpreferencesplusn$,
then a CE exists with $n$ agents
and $m$ items, for almost-all
income-vectors in $\allincomes$
and all preference-profiles in $\allpreferences$.

(b)
If a CE-fair allocation exists with $n+1$ agents, for almost-all
income-vectors in $\allincomesplus$
and all preference-profiles in $\allpreferencesplusn$,
then a CE-fair allocation exists with $n$ agents, for almost-all
income-vectors in $\allincomes$
and all preference-profiles in $\allpreferences$.
\end{lemma}
\begin{proof}

\textbf{(a).}
Let $T^{n+1}\subseteq \allincomesplus$ be the set of income-vectors for which a CE exists for all preference-profiles in $\allpreferencesplusn$. 

Let $T^{n}\subseteq \allincomes$ be the projection of $T^{n+1}$ on $\allincomes$ (derived by just removing the $(n+1)$-th element from every vector in $T^{n+1}$).

...

Given a preference-profile with $n$ agents
and income-vector $\incomevector\in\allincomes$,
create a preference-profile with $n+1$ agents by adding a ``low-income agent''.
...
low-value item. 
\qed
\end{proof}

When $m > n$, in some specific instances, a CE may be \emph{more} likely to exist when there are more agents --- see Appendix \ref{sec:more-agents}. 

\else 
However, so far I could not prove such a general lemma.
In fact, in some specific instances, a CE may be \emph{more} likely to exist when there are more agents --- see Appendix \ref{sec:more-agents}. 
\fi
Therefore, the non-existence for $n\geq 4$ agents must be proved explicitly. 
An additional benefit of the proof below is that it is valid also for additive agents.

\begin{theorem}
\label{thm:4-agents-4-goods-additive}
With $m\geq 4$ goods
and $n\geq 4$ agents, 
there exists an \emph{additive monotonically-increasing} preference-profile and a positive-measure subset of the income-space,
where no allocation is CE-fair (hence no CE exists).
\end{theorem}
\begin{proof}
By Lemma \ref{lem:more-items}, it is sufficient to prove the theorem for $m=4$ goods.

Call the agents with the four highest incomes Alice Bob Carl and Dana. Denote their incomes by $a,b,c,d$. 
Consider the income subspace defined by the following  inequalities:
\begin{align*}
   2 a > 2 b  > a+c  >   b + d  >  2 c  >   a   >  c + d  >  2 d  >  b  >  c  >  d > \text{all other incomes}
\end{align*}
It is open and contains e.g. the point $(160,130,90,66,\ldots)$, so it has a positive measure.
~~~There are four items denoted by $w, x, y, z$.
The agents' preferences contain the following relations:
\begin{itemize}
\item     Alice:   $xy \succ w \succ xz \succ yz \succ x \succ y \succ z \succ \emptyset$.
\item     Bob:    $\text{pairs}\succ w \succ z \succ x \succ y\succ \emptyset$.
\item     Carl:    $\text{pairs}\succ x \succ y \succ w \succ z\succ \emptyset$.
\end{itemize}
Note that these preferences are additive. For example, Alice's valuations for $w,x,y,z$ can be $11,7,5,3$,
Bob's valuations can be $9,7,6,8$ and 
Carl's valuations can be $7,9,8,6$.
The preferences of Dana, as well as of the $n-4$ agents with the lower incomes (if any), are irrelevant for the proof.

Suppose by contradiction that a CE-fair allocation exists. 
In any allocation, at most four agents can get non-empty bundles, and by CE-fairness \eqref{eq:1}, these must be the four agents with the highest incomes.
Denote their bundles by $A,B,C,D$.
Then, the CE-fairness properties imply the following.

\textbf{1. Dana gets no item. }
If she gets an item, then by envy-freeness (which follows from \eqref{eq:1}), the three higher-income agents must also get an item: Alice must get $w$ (her best item), Bob must get $z$ (his best remaining item), Carl must get $x$ (his best remaining item), and Dana gets $y$.
But now \eqref{eq:1} is violated for Alice, since for her:
\begin{align*}
a > c+d
&& \text{while} && 
A \prec C\cup D = xy.
\end{align*}

\textbf{2. Bob and Carl get at most one item.}
If any of them gets more than one item, then \eqref{eq:1} is violated for Dana, since she gets an empty bundle, and for her:
\begin{align*}
d > b/2 > c/2 
&&\text{while}&&
D\prec \maxmin{B}{1}{2} 
&&
\text{or}
&&
D\prec \maxmin{C}{1}{2}.
\end{align*}

\textbf{3. Bob and Carl get exactly one item.}
If one of them (say, Carl) gets no items, then Alice gets at least 3 items, and \eqref{eq:1} is violated for Carl, since:
\begin{align*}
c > {1\over 2}a 
&&\text{while}&&
C\prec \maxmin{A}{1}{2}.
\end{align*}

\textbf{4. Alice gets $w$ plus another item}.
If she does not get $w$, 
then some lower-income agent gets it, so by envy-freeness  \eqref{eq:1}, Alice must get a bundle that she prefers to $w$. Such a bundle must contain both $x$ and $y$. Hence Carl (who gets at most one item) can get either $w$ or $z$.
But now \eqref{eq:1} is violated for Carl, since for him:
\begin{align*}
c > {1\over 2}a 
&&\text{while}&&
C\prec \maxmin{A}{1}{2}.
\end{align*}

So Bob gets a single item, and it is worse for him than $w$. But now \eqref{eq:1} is violated for Bob, since for him:
\begin{align*}
b > {1\over 2}(a+c)
&&\text{while}&&
B\prec \maxmin{(A\cup C)}{1}{2}.
\end{align*}
This is because $A\cup C$ can be partitioned into $w$ plus a pair, and Bob prefers both of these parts to his single item.
\qed
\end{proof}

\section{Five or More Goods}
\label{sec:5items}
In this section there are five goods. 
\citet{babaioff2019competitive}~[Theorem 3.5]
already showed that, with two agents, there may exist a subset of the income-space with a positive measure in which no CE exists.
For completeness, the theorem below provides an alternative proof that
uses the same income range and almost the same preferences, but is based only on the CE-fairness properties 
\eqref{eq:1} and \eqref{eq:2}.
Thus, it shows that even a CE-fair allocation without the other properties of CE (such as weak Pareto-efficientity) may be unattainable in a subset of positive measure.
Moreover, the proof is extended to any number of agents.
\begin{theorem}
\label{thm:2-agents-5-items}
With $n \geq 2$ agents and $m \geq 5$ goods, 
there exist a monotonically\-/increasing preference\-/profile and a subset of the income-space with a positive measure, for which no CE-fair allocation exists (hence no CE exists).
\end{theorem}
\begin{proof}
%Denote the items by $v, w, x, y, z$.
By Lemma \ref{lem:more-items}, it is sufficient to prove the theorem for $m=5$ goods. 

Since there are $5$ goods, we can assume w.l.o.g. that there are at most $5$ agents, since if there are more agents, the $n-5$ lower-income agents necessarily get an empty bundle in any CE-fair allocation.%
\footnote{
If the second conjecture in Appendix \ref{sec:more-agents} were proved, then we could assume w.l.o.g. that there are $n=2$ agents.
}
Denote the two highest incomes by $a,b$,
and the other incomes by $c,d,e$ (if there are less than 5 agents, then $e=0$; if there are less than 4 agents, then in addition $d=0$; if there are less than $3$ agents, then in addition $c=0$).
Consider the income subspace defined by:
\begin{align*}
a > b+c+d+e
\\
b > 3(a+c+d+e)/4 
\end{align*}
There are five goods: $v, w, x, y, z$.
The preferences of Alice and Bob (the two highest-income agents) 
%are the same as in the proof of Theorem \ref{thm:2-agents-5-items} and
contain the following relations:
\begin{itemize}
\item     Alice:   $\text{quartets} \succ vwx,vwy,vwz \succ vw \succ xyz \succ
\\
vxy,vxz,vyz,wxy,wxz,wyz
\succ \text{pairs-except-}vw \succ \text{singletons}$
\item     Bob: 
$\text{quartets} \succ \text{triplets-except-}xyz \succ vx,vy,vz,wx,wy,wz \succ xyz \succ vw \succ w \succ v \succ xy,xz,yz \succ x,y,z$
\end{itemize}
The preferences of the other agents, if any, are irrelevant for the proof.

Suppose by contradiction that a CE-fair allocation exists. 
Denote the bundles of the two highest-income agents by $A,B$,
and the other bundles by $C,D,E$
(if there are less than 5 agents, then $E=\emptyset$; if there are less than 4 agents, then in addition $D=\emptyset$; if there are less than 3 agents, then in addition $C=\emptyset$).

Applying \eqref{eq:1} to Alice gives:
\begin{align*}
a > {1\over 2}(a+b+c+d+e)
&&\implies&&
A\succeq \maxmin{\big(A\cup B\cup C\cup D\cup E\big)}{1}{2} \succeq xyz
\end{align*}
(the rightmost bundle is at least $xyz$ by the 2-partition $(vw,xyz)$).
We now check Alice's possible bundles by order of Alice's preferences, from $xyz$ upwards.

If $A=xyz$, then $B\cup C\cup D\cup E =vw$ and 
\eqref{eq:1} is violated for Alice, since for her $a > b+c+d+e$ while $A \prec B \cup C\cup D\cup E$.

If $A = vw$, then $B \subseteq xyz$ and $B \cup C \cup D \cup E = xyz$. 
Now, 
\eqref{eq:2} is violated for Bob, since $2b/3 > (a+c+d+e)/2$, so $b > b/3 + (a+c+d+e)/2$, so:
\begin{align*}
b > {1\over 3}(b+c+d+e) + {1\over 2}a
&&\text{while}&&
B\prec \left(\maxmin{(B\cup C\cup D\cup E)}{1}{3}\right) \sqcup \left(\maxmin{A}{1}{2}\right) 
\end{align*}
(the rightmost bundle contains one item from $B\cup C\cup D\cup E$ and one item from $A$, so it is one of $vx,vy,vz,wx,wy,wz$;  
Bob prefers all these to $xyz$, which contains $B$).

If $A$ is $vwx$ or $vwy$ or $vwz$, then 
$B$ is contained in $yz$ or $xz$ or $xy$.
But then \eqref{eq:1} is violated for Bob, since for him:
\begin{align*}
b > {1\over 3}(a+b+c+d+e) 
&&\text{while}&&
B\prec \maxmin{\big(A\cup B\cup C\cup D\cup E\big)}{1}{3} 
\end{align*}
(the rightmost bundle is at least $v$ by the 3-partition $(xyz,v,w)$).

Finally, if Alice gets four or more items,
then \eqref{eq:1} is violated for Bob, since for him:
\begin{align*}
b > {3\over 4} a
&&\text{while}&&
B\prec \maxmin{A}{3}{4} 
\end{align*}
(the rightmost bundle contains a triplet, and Bob prefers all triplets to all singletons). 
\qed
\end{proof}

\section{Allocation of Chores}
\label{sec:chores}
So far it was assumed that all items are goods, or equivalently, that all preferences are monotonically-increasing. 
In this section it is assumed that all items are 
\emph{chores} (also called: bads), or equivalently, that all preferences are \emph{monotonically-decreasing} --- an agent always prefers a set to its supersets.
%\begin{itemize}
%\item \emph{Strict} --- for every two bundles $X\neq Y$, either $X\succ_i Y$ or $Y\succ_i X$.
%\item \emph{Monotonically decreasing} --- an agent prefers a bundle over all its supersets.
%In particular, the most desirable bundle is $\emptyset$. 
%\end{itemize}
%The former assumption is the same as for goods, but the latter assumption is different and this is the assumption that classifies the items as ``chores''.

Following the economic literature \citep{Bogomolnaia2017Competitive}, 
in a chore-allocation problem,
each agent has a \emph{negative} income, so 
the set of all possible income-vectors for $n$ agents is $\incomevector\in\mathbb{R}^n_-$.

Similarly, the prices of all chores are {negative}. 
A CE is a pair $(\pricevector,\mathbf{X})$, where $\pricevector$ is a vector of $m$ negative numbers and $\mathbf{X}$ is an allocation. The CE conditions are defined exactly as for goods, namely:

\textbf{Condition 1.}
The price of an agent's bundle (which is a negative number) is at most the agent's income (which is a negative number too):
\begin{align*}
\forall i\in\allagents:~~~ p(X_i)\leq t_i
\end{align*}

\textbf{Condition 2.}
Each agent's bundle is the best bundle he/she can afford with his/her income.
Formally, for every bundle $Y\subseteq \allitems$:
\begin{align*}
\forall i\in\allagents:~~~
p(Y) \leq t_i 
~~~ \implies~~~ 
Y_i \preceq_i X_i
%Y\succ_i X_i ~~~ \implies~~~ -p(Y) > -t_i.
\end{align*}

Condition 1 implies that each agent must do chores whose total price drops below his income. This means that agents with a smaller (= more negative) income need to do more chores, or less desirable chores. 
Thus, CE can be seen as a rule for dividing chores fairly among agents with different liabilities, where an agent with liability $|t_i|$ (a positive number) has an income of $t_i$ (a negative number).

One implication of Condition 1 for chores (which does not hold for goods) is that each agent must do at least one chore. This means that a CE does not exist if there are less chores than agents ($m<n$). 
%Hence from now on we assume that $m\geq n$. 

Properties \eqref{eq:1} and  \eqref{eq:2} do not use the sign of the income or prices. Therefore, Propositions \ref{prop:1} and \ref{prop:2} hold equally well with chores, and any CE allocation is CE-fair. In particular, a CE with equal incomes is still envy-free and satisfies the 1-out-of-$n$ maximin-share guarantee. 
%\footnote{
%One difference is that the relation between different conditions implied  \eqref{eq:1} is reversed: with goods, $a \geq 2b/3$ is stronger than $a \geq b/3$ and $A\succeq \maxmin{B}{2}{3}$ is stronger than $A\succeq \maxmin{B}{1}{3}$;
%with chores (and negative incomes), $-a \geq -b/3$ is stronger than $-a \geq -2b/3$ and $A\succeq \maxmin{B}{1}{3}$ is stronger than $A\succeq \maxmin{B}{2}{3}$.
%}

\subsection{Two agents}
\label{sub:chores-2agents}
Intuitively, with two agents, allocating chores is equivalent to allocating exemptions from chores.
An exemption from chore is a good; therefore, chores-allocation is equivalent to goods-allocation.% 
\footnote{
\citet{Bogomolnaia2017Competitive} explain why this equivalence does not work with $n\geq 3$ agents:
for each chore, there are $n-1$ identical exemptions to share. When $n-1\geq 2$, a CE might allocate two or more exemptions to the same agent, which is not a valid allocation.
}
This subsection proves this intuition formally.

%We need some notation.
Given a preference relation $\succeq_i$, define its \emph{dual preference relation}, $\succeq^*_i$, as a relation that satisfies, for every two bundles $X,Y$:
\begin{align*}
X \succeq^*_i Y
\iff 
(\allitems \setminus X) \succeq_i (\allitems \setminus Y)
\end{align*}
%The ``dual preference'' represents an agent's preference on items given to the other agent: if Alice ``prefers'' $X$ to $Y$ by the dual relation, it means that she prefers that $X$ rather than $Y$ will be given to Bob.
Note that $\succeq_i$ is monotonically-increasing iff $\succeq^*_i$ is monotonically-decreasing, so $\succeq_i$ represents preferences on goods iff $\succeq^*_i$ represents preferences on chores.
\begin{theorem}
\label{thm:duality}
Consider the following two allocation problems, defined on the same set $\allitems$ with $|\allitems|\geq 2$:
\begin{itemize}
\item \emph{Goods-allocation problem}, defined as allocating the items in $\allitems$ between two agents with monotonically-increasing preferences $\succeq_A,\succeq_B$ and positive incomes $(a,b)$;
\item \emph{Chores-allocation problem}, defined as allocating the items in $\allitems$ between two agents with 
monotonically-decreasing preferences $\succeq^*_A,\succeq^*_B$ and negative incomes $(-b,-a)$.
\end{itemize}
Then, there exists a CE in the goods-allocation problem, if-and-only-if
there exists a CE in the chores-allocation problem.
\end{theorem}

\begin{proof}
Throughout the proof,  assume w.l.o.g. that $a\geq b$.

\textbf{Goods CE $\implies$ Chores CE.}
Let $(\pricevector, (A,B))$ be a CE in the goods-allocation problem. Transform this CE in several steps.

\emph{Step 1. Ensure that both bundles are non-empty.} Alice's bundle $A$ obviously cannot be empty, since otherwise Alice envies Bob.
If Bob's bundle $B$ is empty (so $A=\allitems$), 
let $x\in \allitems$ be a good for which 
Alice prefers
$\allitems\setminus \{x\}$ to all bundles with $m-1$ goods. Then, set $A' := \allitems\setminus \{x\}$, $B' := \{x\}$, $p'(x)=b$, and $p'(y)=r\cdot p(y)$ for all $y\neq x$, where $r$ is a normalization constant defined by $r:= a / p(A')$.%
\footnote{
The expression for $r$ is well-defined thanks to the assumption $|\allitems|\geq 2$.
Without this assumption,
at least one agent must get an empty bundle of chores,
but then CE condition 1 is violated for this agent.
}
Note that $r\geq 1$, since $p(A')\leq p(A) \leq a$.

This $(\pricevector', (A',B'))$ is still a CE in the goods-allocation problem. \emph{Proof:} CE condition 1 holds, since $p'(A') = r p(A') = a$ and $p'(B') = b$.
CE condition 2 holds for Alice, since she prefers $A'$ to all bundles except $\allitems$, and she cannot afford $\allitems$.
CE condition 2 holds for Bob too, since in the original price-vector $\pricevector$, he could not afford any item. In the new price-vector $\pricevector'$, the prices of all items except $x$ are higher (they are multiplied by $r\geq 1$), so Bob certainly cannot afford them.

\emph{Step 2. Ensure that both agents exhaust their incomes.} Once both bundles are non-empty, if the price of an agent's bundle is less than the agent's income, arbitrarily increase prices of items in his/her bundle until the total bundle price equals his/her income. The CE conditions are still satisfied. To ease notation, denote the new CE by the same notation $(\pricevector', (A',B'))$.

\emph{Step 3. Construct a CE for the chore-allocation problem.} Allocate $B'$ to Alice, $A'$ to Bob, and set the price-vector to $-\pricevector'$. 

This $(-\pricevector', (B',A'))$ is a CE in the chore-allocation problem. \emph{Proof}: CE condition 1 obviously holds, since $-p'(B')=-b$ and $-p'(A')=-a$. To prove that CE condition 2 holds for Alice, it is sufficient to prove that every bundle that Alice prefers over $B'$ by her dual preference-relation, costs more than $-b$. Indeed, let $X$ be a bundle such that $X \succ_A^* B'$. By definition of the dual preference relation, 
$(\allitems \setminus X) \succ_A (\allitems \setminus B')$. But $\allitems \setminus B' = A'$, so $(\allitems \setminus X) \succ_A A'$.
Now, the bundle $\allitems \setminus X$ must be unaffordable for Alice in the goods-CE, so $a < p'(\allitems \setminus X)$. 
This implies $a < p'(\allitems) - p'(X)$.
But by Step 2 above, $p'(\allitems) = a+b$. Hence we get $a < (a+b) - p'(X)$ which implies $-p'(X) > -b$.
By an analogous argument, CE condition 2 holds for Bob.

\textbf{Chores CE $\implies$ Goods CE.}
Let $(-\mathbf{q}, (B,A))$ be a CE in the chores allocation problem. 
Normalize it as in Step 2 above, so that $-q(B)=-b$ and $-q(A)=-a$.
This $(\mathbf{q}, (A,B))$ is a CE in the goods allocation problem. 
The proof is entirely analogous to the proof in step 3 above.
\iffalse
CE condition 1 obviously holds since by the normalization $q(A) = a, q(B) = b$.
To prove that CE condition 2 holds for Alice, let $X$ be a bundle for which $X \succ_A A$.
So $\allitems\setminus X \succ^*_A \allitems\setminus A = B$. 
So in the chores-CE, we must have $-b < -q(\allitems\setminus X) = -q(\allitems)+q(X) = -a-b + q(X)$, which implies $a<q(X)$. An analogous proof holds for Bob.
\fi
\end{proof}

Theorem \ref{thm:duality} implies that the existence result of subsection \ref{sub:4items-2agents} and the non-existence result of section \ref{sec:5items} are true for chores too. So 
CE exists for almost all incomes
when there are two agents and at most four chores, 
but not when there are five or more chores.

\subsection{Three or more agents}
\label{sub:chores-3agents}
Unfortunately, the positive results for three agents do not hold for chores, as shown in the following theorem.

\begin{theorem}
With $n \geq 3$ agents and $m\geq 1$ chores,
there exists an additive monotonically\-/decreasing preference-profile and a positive-measure subset of the income-space, where no allocation is CE-fair (hence no CE exists).
\end{theorem}
\begin{proof}
Let $\incomevector\in\mathbb{R}^n_-$ be the vector of incomes, with:
\begin{align*}
2 t_1 < t_n < \cdots < t_1 < 0.
\end{align*}
There is one ``hard'' chore denoted by $z$. The agents' preferences are identical, monotonically-decreasing, and contain the following relations over the chores:
\begin{align*}
\emptyset \succ \text{[bundles without $z$]} \succ \{z\} \succ \text{[bundles with $z$]}
\end{align*}
Note that these preferences are additive. For example, the value of $z$ can be $-2^{2m}$ 
and the values of the other chores can be $-2, -4, -8, \ldots$.

Suppose by contradiction that there exists a CE-fair allocation $\mathbf{X}$.
%a CE- exists and denote it by $(-\pricevector,\mathbf{X})$. 
Let $i\in\allagents$ be the agent who gets the hard chore $z$.
By assumption  $t_i > 2 t_1$, and $n\geq 3$ implies that $2 t_1 \geq \sum_{j\neq i} t_j$ (the left-hand side contains at least two negative terms, each of which is at most $t_1$).
But $i$ prefers all other chores over his bundle $X_i$, which contains the hard chore.  
So
\eqref{eq:1} is violated for $i$:
\begin{align*}
t_i >  \sum_{j\neq i} t_j
&& \text{while} && 
X_i \prec \bigcup_{j\neq i} X_j.
~~~~~~~~\qed
\end{align*}
%Denote $X_{-i} := \allitems \setminus X_i = \cup_{j\neq i} X_j$. 
%By CE condition 1:
%\begin{align*}
%&\forall j\neq i: -p(X_j) \leq -t_j
%\\
%\implies&
%-p(X_{-i}) \leq - \sum_{j\neq i} t_j
%\end{align*}
%By assumption  $-t_i > -2 t_1$, and $n\geq 3$ implies that $-2 t_1 > - \sum_{j\neq i} t_j$.
%Hence, 
%$-t_i > -p(X_{-i})$ so 
%agent $i$ can afford $X_{-i}$. 
%Since $X_{-i}$ does not contain the hard chore $z$, agent $i$ prefers it to $X_i$, so CE Condition 2 is violated for $i$. 
\end{proof}

\section{Related Work}
\label{sec:related}
Babaioff, Nisan and Talgam-Cohen study several settings besides the ones mentioned in previous sections. 
First, they study the setting of two agents with additive preferences and an arbitrary number of goods \citep{babaioff2019fair}. They present several sufficient conditions by which a CE exists for almost all incomes, such as: identical preferences and arbitrary incomes, or arbitrary preferences and almost-equal incomes. The general case remains open. 
Second, they study other restricted preference domains besides additive preferences \citep{babaioff2019competitive}, such as the so-called \emph{lexicographic}, \emph{leveled}, \emph{responsive}, \emph{submodular}, and more. They present a hierarchy of preference-domains and present existence results for some domains in the hierarchy.

\subsection{CE with indivisible items}
A neat review of the different computational problems related to CE can be found in Appendix C of \citet{Babaioff2017Competitive}.

\citet{Deng2003Complexity} study an \emph{Arrow-Debreu market}. This is a generalization of the Fisher market studied in the present paper, in which each agent may be both a buyer and a seller, i.e., each agent may bring to the market an initial set of items to sell, rather than just an initial income.
They assume that all items are goods, and all agents have additive preferences.
They prove that deciding whether a CE exists is NP-hard even if there are $3$ agents. They present an approximation algorithm which relaxes the CE conditions in two ways: (1) The bundle allocated to each agent is valued at least $(1-\epsilon)$ of the optimum given the prices, and (2) at least a fraction $(1-\epsilon)$ of the items are allocated. Both these relaxations are unrelated to our setting, in which the preferences are ordinal and all items must be allocated.

\citet{bouveret2016characterizing} 
present a scale of five fairness criteria for allocating indivisible goods among agents with equal entitlements and additive preferences. 
The weakest criterion in their scale is the \emph{maximin share guarantee} (for all $i\in\allagents: X_i\succeq_i \maxmin{\allitems}{1}{n}$), and the strongest criterion is \emph{CE-from-equal-incomes (CEEI)}. They asked what is the computational complexity of deciding whether a CEEI exists. 
This question was answered soon afterwards by \citet{Aziz2015Competitive}, who proved that the problem is weakly NP-hard when there are two agents and $m$ items, and strongly NP-hard when there are $n$ agents and $3n$ items. 
\citet{Branzei2015Characterization} further proved that even verifying whether a given allocation is a CEEI is co-NP-hard. 
%\citet{Aziz2015Competitive} presented a stronger condition called CEEI-FRAC which is, interestingly, easier to verify --- it can be verified in polynomial time.

\citet{Branzei2015Characterization} study CEEI for the case in which the agents are \emph{single-minded}: for each agent $i$ there is a subset of goods $D_i\subseteq \allitems$, such that $i$ strictly prefers all bundles that contain $D_i$ to all bundles that do not contain $D_i$, is indifferent between any two bundles that contain $D_i$, and is indifferent between any two bundles that do not contain $D_i$.
In this case, verifying whether a given allocation is a CEEI is polynomial, but checking if a CEEI exists is co-NP-complete.
Single-minded agents are further studied by \citet{Branzei2016To}. In contrast to our setting, they assume that each good can come in multiple units. 
The main difference between $k$ units of the same good, and $k$ different goods that all agents consider to be equivalent, is that the $k$ units of the same good are required (by definition) to have the same price.
This restriction implies that the ``income-exhaustion'' requirement (H1) from subsection \ref{sub:pixeps} is no longer w.l.o.g.:
\citet{Branzei2016To} show an example in which (1) a CE where all agents exhaust their income does not exist, (2) a CE where some agents spend less than their income does exist. They call this solution CAEI --- Competitive Allocation from Equal Incomes. Interestingly, in contrast to a CEEI, it is possible to find a CAEI (if one exists) in polynomial time.

\citet{Heinen2015Fairness} 
extend the five-criteria scale of \citet{bouveret2016characterizing} 
for \emph{$k$-additive preferences}, in which each agent reports a value for each bundle of at most $k$ items, and the values of larger bundles are determined by adding and subtracting the values of the basic bundles. 
They study both goods and chores.
They show an example of an envy-free and Pareto-efficient allocation of goods, that is not a CE from equal incomes.

\citet{budish2011combinatorial} studies the most general setting in which agents can have arbitrary preference relations over bundles --- there is no monotonicity assumption --- the market can have a mixture of goods and chores.
He presents a beautiful and practical \emph{approximate CEEI} mechanism, which relaxes the CEEI conditions in two ways: (1) The agents' incomes are not exactly equal, and (2) a small number of items may have to be added or discarded (the motivating application is allocating course seats among students: usually it is possible to add a small number of seats to accommodate the demand). He proved that an approximate-CEEI always exists (although \citet{othman2016complexity} later proved that the computation of approximate-CEEI is PPAD-complete). 
The first relaxation (1) is closely related to the concept of CE for almost all incomes. 
In fact, both models can be described as adding an arbitrarily small random perturbation to a given income-vector; in his model, the given vector has equal incomes, and in our model, the given vector can have non-equal incomes (that correspond to agents with different entitlements). 
Budish's second relaxation (2) make his solution less useful when the number of items is fixed.

\citet{barman2019proximity} study Fisher markets of goods, in which all agents have additive preferences. They show that a fractional CE (where some goods are divided) can always be rounded to an integral CE (where goods remain indivisible), by changing the agents' incomes. The change in each income can be as high as the largest price of a good in the fractional CE. Thus, their paper does not answer the question whether a CE exists for almost all income-vectors.

\iffalse
While the present paper deals mainly with existence of CE, the existence question is closely related to the computational problem. If a CE exists for almost all incomes, then, for almost all incomes, the decision problem is trivial --- its answer is always ``yes''.
This means that the decision problem can be solved in \emph{smoothed polynomial} time \citep{spielman2004smoothed}.
Even when a CE does not exist for almost all incomes, it is still possible, in theory, that the decision problem can be solved in polynomial time for almost all incomes; this is a question for future work.
\fi

\subsection{Picking-sequences}

Picking-sequences are common practical mechanisms for allocating indivisible items. 
They are favored in various real-life situations due to their simplicity, privacy and low communication complexity \citep{Bouveret2011General}. 
%\footnote{The Wikipedia page on ``Picking-sequence'' contains a live survey on this topic. Most material in this subsection was adapted from that page.}

In matching markets, in which each agent is entitled to exactly one item, a common mechanism is \emph{random serial dictatorship} \citep{abdulkadiroglu1998random,manea2009asymptotic,aziz2013computational,bade2019random}.
This is a special case of a picking-sequence in which each agent has exactly one turn, and the sequence of turns is selected at random.
In this setting, each agent has a simple dominant strategy and it is to pick the best remaining item (as in Lemma \ref{lem:dominated}). 
In contrast, 
in a general picking-sequence, 
an agent who has several non-consecutive turns usually has no dominant strategies. This motivates various works that study the strategic behavior of agents in picking sequences.

\citet{Brams2004Dividing} and \citet{Brams2007Mathematics}
study picking-sequences for allocating cabinet ministries among parties. There is a coalition of parties; each party has a different number of seats in the parliament; larger parties should be allocated more ministries, or more prestigious ministries. This is an interesting use-case of fair division with different entitlements. A possible solution to this problem is to determine a picking-sequence, based on the different entitlements, and let each party pick a ministry in turn. Such a solution is used in Northern Ireland, Denmark and the European parliament \citep{OLeary2005Divisor}. 

Brams and Kaplan assume that all items are goods, each agent has a strict ordering on the items, and has \emph{responsive preferences} over bundles. 
This means that replacing an item in a bundle by a preferred item, always results in a preferred bundle. 
Responsive preferences are a strict superset of additive preferences and a strict subset of the monotonic preferences studied in this paper \citep{Babaioff2017Competitive}. With responsive preferences, at each point in the picking-sequence, there is a single remaining item which the picking agent most prefers. An agent can be ``truthful'' and pick this best item, or be ``strategic'' and pick another item based on his knowledge of the other agents' preferences.
\citet{Brams2004Dividing} prove several results that may be relevant for future development of pixeps: (1) With two agents, both truthful and strategic choices lead to Pareto-efficient allocations in SPE. Moreover, the game is monotonic in the following sense: 
if an agent's turns are moved earlier in the sequence, then the agent's bundle in SPE improves (e.g, Alice is better-off in an SPE of $ABBA$ than in an SPE of $BABA$). (2) Both properties are still true with three or more truthful agents. But with three or more strategic agents, a picking-sequence might have an SPE that is not Pareto-efficient. Moreover, the game might be non-monotonic, i.e, an agent who is allowed to pick earlier in the sequence, might have a worse bundle in SPE. (3) For two agents, a simple modification of the picking-sequence game leads to a mechanism in which picking truthfully is a dominant strategy (see \citet{kalinowski2013strategic} for a formal proof). This makes it easier to find pixeps that implement a CE, since an agent who picks truthfully never wants a dominated bundle.

\citet{kohler1971class} present a linear-time algorithm to compute a SPE in the special case in which there are $n=2$ agents, any number of goods,  and the picking-sequence is alternating ($ABABAB...$).
\citet{kalinowski2013strategic} generalize this algorithm to any picking-sequence for $n=2$. They prove that, when $n$ is unbounded, there can be exponentially many SPE, and finding even one of them is PSPACE-hard.

\citet{Aziz2017Equilibria} 
study the picking-sequence as a one-shot game, rather than as a sequential game. They assume that the agents report their entire ranking of the items to the principal, and the principal then picks, for each agent, his/her most preferred remaining item. 
They present a linear-time algorithm for computing a pure Nash equilibrium of this one-shot game.
They do not discuss whether the Nash equilibrium is also a competitive-equilibrium, or whether it satisfies any notion of fairness.
They assume that all agents have additive preferences, and all items are goods.

Another line of work related to picking-sequences asks how to select a picking-sequence that maximizes some global objective. 
\citet{Bouveret2011General} study this question  under the assumption that all items are goods, all preferences are additive, and moreover, there is a \emph{single} known common scoring-function that relates the rank of an item in an agent's ranking to its value for the agent. The allocator does not know the ranking of each agent, but knows that all rankings are random draws from a given probability distribution. The allocator's goal is to maximize the expected value of some social welfare function. They show picking-sequences that maximize the expected utilitarian welfare (sum of values) or the expected egalitarian welfare (minimum value) in various settings.
\citet{Kalinowski2013Social} show that, for two agents 
with a specific scoring-function called \emph{Borda scoring} \citep{Young1974Borda},
when each ranking is equally probable, the alternating sequence ($ABABAB...$) attains the maximum expected utilitarian welfare.

Algorithms \ref{alg:4 items 2 agents} and \ref{alg:4 items 3 agents} in this paper 
offer each agent in turn a possible pixep, 
and allow the agent to accept or reject that pixep.
If the agent accepts, the pixep is played; otherwise, the protocl offers another pixep to another agent.
This is similar but not identical to an \emph{alternating-offers} protocol for negotiation between two players. In an alternating-offers protocol,  
the players themselves make the offers: 
each player suggests a possible outcome until an agreement is reached \citep{anbarci1993noncooperative,anbarci2006finite,erlich2018negotiation}. 

\subsection{Fairness with Unequal entitlements}
As shown in Section \ref{sec:fairness}, competitive equilibrium with unequal incomes can be regarded as a rule for fair allocation of items among agents with unequal entitlements. 
Fair allocation with unequal entitlements has been studied in some recent papers.

\citet{farhadi2019fair} study allocation of indivisible goods in a cardinal model, in which the preferences of each agent $i$ are represented by an additive function $V_i$.
They define the \emph{weighted maximin share} of agent $i$. Their definition, translated to the notation of this paper, reads:
\begin{align*}
\text{WMMS}_i
~~
:= 
t_i\cdot
\left(
~~
\max_{\mathbf{Y}\in \partition{\allitems}{n}}
~~
\min_{j\in\allagents}
~~
\frac{V_i(Y_j)}{t_j}
\right)
\end{align*}
Intuitively, the partition $\mathbf{Y}$ is the fairest partition that agent $i$ could suggest if all agents had the same valuation function $V_i$.
An allocation $\mathbf{X}$ is called \emph{WMMS-fair} if $V_i(X_i)\geq \text{WMMS}_i$ for all $i\in\allagents$.

WMMS-fairness is fundamentally different than CE-fairness:
WMMS depends on the \emph{cardinal} values that the agents assign to the items, while the CE-fairness properties
\eqref{eq:1} and  \eqref{eq:2} depend only on the agents' \emph{ordinal} rankings of the bundles. 
The cardinal and ordinal fairness conditions are independent and do not imply each other.

To see that WMMS-fairness does not imply CE-fairness, consider an instance in which all entitlements are equal. Then
$\text{WMMS}_i = V_i\left(\maxmin{\allitems}{1}{n}\right)$,
so WMMS-fairness is equivalent to maximin-share fairness.
Suppose further that there are $m=n$ goods and all agents have the same valuations.
Then, the WMMS of all agents is the smallest value of a good, so every allocation that gives a single good to each agent is WMMS-fair. 
However, no allocation is envy-free, hence no allocation is CE-fair.

To see that CE-fairness does not imply WMMS-fairness, consider an instance with $n=2$ agents with entitlements $a=98$ and $b=2$.
Suppose there are $m=3$ goods $x,y,z$ that Alice values at $49, 49, 2$ and Bob values at $98,2,0$.
%Alice's WMMS partition is $(xy,z)$, since the value it gives to each agent (by Alice's value function) exactly equals his/her entitlement. It is clear from the definition that the WMMS value of an agent cannot be larger than his/her entitlement. 
Hence,
$\text{WMMS}_A \geq 98\cdot 1 = 98$.
Similarly, Bob's WMMS partition is $(xz,y)$ so $\text{WMMS}_B \geq 2\cdot 1 = 2$.
It is impossible to give both agents their WMMS share, so a WMMS-fair allocation does not exist.
However, a CE does exist. For example, giving all goods to Alice and pricing them at $a/3$ is a CE, hence the allocation is CE-fair.

\citet{farhadi2019fair} 
present an algorithm that guarantees each agent $i$ a value of at least $\frac{1}{n}\text{WMMS}_i$, and prove it may be impossible to guarantee more.

\citet{aziz2019maxmin} extend this research to allocation of indivisible \emph{chores} among agents with additive preferences.
In this setting the values and the $\text{WMMS}_i$ are negative numbers.
They present an algorithm that guarantees to each agent $i$ a value of at least $n\cdot \text{WMMS}_i$, and 
prove that it may be impossible to guarantee more than $\frac{4}{3}\cdot \text{WMMS}_i$ even for two agents.
They show improved approximation ratios for some special cases.

Fair division with unequal entitlements 
has also been studied in \emph{cake-cutting}; see
\citet{cseh2018complexity,SegalHalevi2018Entitlements} and the references therein.
In this setting,
$\allitems$ is a continuous resource, usually represented by an interval and called ``cake''.
Each agent $i$ has a non-atomic measure  $V_i$ over the measurable subsets of $\allitems$.
%A theorem of \citet{Stromquist1985Sets} 
Non-atomicity implies that, for each agent $i\in\allagents$ and each integer $d\geq 1$,  each bundle $Z$ can be partitioned into $d$ subsets of equal value to $i$. Hence, $V_i(\maxmin{Z}{l}{d}) = {l\over d}V_i(Z)$.
~~
There are two common fairness criteria for cake-cutting with different entitlements.

1. \emph{$\incomevector$-proportionality} means that for all $i\in\allagents$: $V_i(X_i) \geq \frac{t_i}{\sum_{j\in \allagents}t_j} V_i(\allitems)$.
%It can be verified that $WMMS_i \leq \frac{t_i}{\sum_{j\in \allagents}t_j} V_i(\allitems)$, so weighted-proportionality implies WMMS-fairness. Moreover, 
It implies property \eqref{eq:1} for $K=\allagents$, but not for other $K$. For example, suppose there are three additive agents with equal entitlement. Suppose Alice values her bundle at $2$, Bob's bundle at $3$ and Carl's bundle at $1$. Then the allocation is $\incomevector$-proportional for her, but does not satisfy \eqref{eq:1} e.g. for $K=\{Bob\}$ and $l=d=1$.
The same example shows that $\incomevector$-proportionality does not imply \eqref{eq:2}, for example with $l_A=l_C=0,l_B=d_B=1$.

2. \emph{$\mathbf{t}$-envy-freeness} \citep{Reijnierse1998Finding}
means that for  all $i,j\in\allagents$: $V_i(X_i) \geq \frac{t_i}{t_j} V_i(X_j)$.
It implies $\incomevector$-proportionality, and implies \eqref{eq:1} for any $K\subseteq \allagents$ and any integers $d\geq l\geq 1$.
Assume that the income of agent $i$ satisfies $t_i\geq {l\over d}\sum_{j\in K}t_j$.
The 
$\incomevector$- envy-freeness implies that, for all $j\in \allagents$, $V_i(X_i)\cdot t_j \geq V_i(X_j)\cdot t_i$.
Summing over all $j\in K$ gives
$V_i(X_i)\cdot \sum_{j\in K}t_j \geq V_i(\cup_{j\in K}X_j)\cdot t_i \geq V_i(\cup_{j\in K}X_j)\cdot {l\over d}\sum_{j\in K}t_j$,
so 
$V_i(X_i) \geq {l\over d}V_i(\cup_{j\in K}X_j) = V_i(\maxmin{(\cup_{j\in K}X_j)}{l}{d})$.
Similarly, it implies \eqref{eq:2} for any sequence $d_j\geq l_j \geq 0$ ($j\in\allagents$).
Assume that the income of agent $i$ satisfies $t_i\geq \sum_{j\in K}{l_j\over d_j}t_j$.
The 
$\incomevector$-envy-freeness implies that, for all $j\in \allagents$, $V_i(X_i)\cdot {l_j\over d_j}t_j \geq {l_j\over d_j}V_i(X_j)\cdot t_i$.
Summing over all $j\in K$ gives
$V_i(X_i)\cdot \sum_{j\in K}{l_j\over d_j}t_j \geq \sum_{j\in K}{l_j\over d_j}V_i(X_j)\cdot t_i$,
so 
$V_i(X_i) \geq \sum_{j\in K}{l_j\over d_j}V_i(X_j) 
=
\sum_{j\in K}V_i(\maxmin{X_j}{l_j}{d_j})
=
V_i(
\sqcup_{j\in K}\maxmin{X_j}{l_j}{d_j})$.

The most recent advancement in this front was by \citet{garg2020approximating}. They presented an algorithm that approximates the maximum \emph{Nash welfare} for agents with different entitlements, defined as the weighted geometric mean of the agents' valuations:
\begin{align*}
\max_{\mathbf{X}}
\left(\prod_{i\in\allagents} V_i(X_i)^{t_i}\right)^{1/\sum_{i\in\allagents}t_i}
\end{align*}

\subsection{Apportionment methods}
\label{sub:apportionment}
The \emph{apportionment problem} 
is the problem of allocating seats in a legislature body among different groups in proportion to their size \citep{Balinski1975}.
Examples include allocating seats in the USA congress among the states in proportion to their population,
or allocating seats in a parliament among political parties in proportion to the number of ballots they received in the elections.
This is a special case of fair allocation with unequal entitlements,
where the items are the seats and the entitlements are the population sizes or the ballot counts.
In this special case, 
All items are identical,
all agents have identical additive preferences,
and the value of each bundle equals the bundle size.

A full treatment of apportionment methods is beyond the scope of the present paper; see \citet{Biro2015,Koczy2017} for  recent surveys. However, one method is particularly related to CE for almost all incomes, and it is the method developed by D'Hondt in 1882.%
\footnote{
I am grateful to an anonymous reviewer for suggesting the connection between the D'Hondt method and CE%
%, and to Balazs Sziklai for explaining to me the D'Hondt method and its equivalence with Jefferson's method
.
The method of D'Hondt has many variants that arrive at the same final outcome using a different algorithm. Some of them are the Jefferson method in the USA and the Bader-Ofer system in Israel.
See e.g. \citet{flis2019pot}.
}

Denote by $S$ the total number of seats to allocate.
For each party $i\in\allagents$, denote by $t_i$ its number of ballots and by $s_i$ the number of seats allocated to it.
Initially, all $s_i$ are set to $0$.
Then a seat is allocated to a party $i\in\allagents$ for which the quotient $t_i/(s_i+1)$ is largest, and its $s_i$ is increased by $1$. This repeats until finally $\sum_{i\in \allagents}s_i = S$.
Suppose that the vector $\incomevector$ of ballot-counts is such that
there are no identities of the kind:
\begin{align}
\label{eq:apportionment-ties}
k_i\cdot t_i = k_j\cdot t_j
\end{align}
for any two integers $k_i,k_j\range{1}{S}$ and any two indices $i\neq j$.%
\footnote{
I am grateful to user Wolfram here:
https://math.stackexchange.com/q/2118349/29780
for suggesting this restriction.
}
Then, at each step, there is a single $i$ for which 
the quotient $t_i/(s_i+1)$ is largest.
Let $j$ be the party who gets the last seat, and denote $p := t_j / s_j$.
Then the allocation $(s_1,\ldots,s_n)$ with the price $p$ for every seat is a CE. 
This is because, for all parties $i\in\allagents$:
\begin{align*}
&t_i / (s_i+1) ~<~ t_j/s_j ~\leq~ t_i/s_i
\\
\implies&
p\cdot s_i ~\leq~ t_i ~<~ p\cdot(s_i+1),
\end{align*}
so at a price of $p$ per seat, each party can afford its own bundle ($s_i$ seats) but cannot afford any larger bundle ($s_i+1$ seats).

The identities \eqref{eq:apportionment-ties} hold only for a subset of measure zero of $\allincomes$,
so if $\incomevector$ could be an arbitrary vector in $\allincomes$, then one could say that a CE exists for almost all incomes.
However, $\incomevector$ is in fact a vector of integers (ballot-counts), 
and all elements of $\incomevector$ are bounded by the total number of voters.
Hence the set of possible vectors $\incomevector$ is finite, 
so the 
identities \eqref{eq:apportionment-ties} hold for a positive fraction of this set.
However, ties such as \eqref{eq:apportionment-ties} are never encountered in real-life apportionment scenarios \citep{Koczy2018}, so practically, a CE always exists.

The above discussion implies that the D'Hondt seat allocation is CE-fair. 
In particular, applying 
\eqref{eq:1} with $K=\allagents$ and $d = S =$ the total number of seats
implies that,
if a party received at least a fraction $l/S$ of the total number of ballots,
then it must get at least the $l$-out-of-$S$ maximin-share of the $S$ seats, which is at least $l$ seats.
This property is called the \emph{lower Hare-quota}.
It is a known property of the D'Hondt's method  \citep{pukelsheim2017proportional}[Section 11.4].
But D'Hondt's method satisfies all the other instantiations of \eqref{eq:1} and \eqref{eq:2}, which --- as far as I know --- was not noted before.

It is interesting whether there are other apportionment rules that implement a CE-fair allocation,
and whether 
the existence result extends to settings in which different seats may have different values.

\subsection{Fairness in subgame-perfect equilibrium}
Most works on fair division either 
ignore strategic considerations altogether,
or look for \emph{truthful allocation mechanisms} --- mechanisms that implement a fair outcome in dominant strategies.
\iffalse
See \citet{kyropoulou2019fair} for a recent survey.
\fi
I am aware of only few works that consider implemetation of a fair outcome in subgame-perfect equilibrium:
\begin{itemize}
\item \citet{10.2307/1911364} 
show that, in general, various outcomes that cannot be implemented in Nash equilibrium, can be implemented in SPE.
One such outcome is the competitive equilibrium in a market of divisible resources (sub. 6.3).
\item \citet{Nicolo2008Strategic} present an algorithm for dividing a heterogeneous divisible resource (``cake'') between two agents, where the SPE is Pareto-efficient and envy-free;
\item \citet{Nicolo2012Equal} present an algorithm for the same setting, where the SPE satisfies a different fairness property known as \emph{equal-opportunity equivalence}; 
\item \citet{nicolo2017divide} present an algorithm for allocating a single indivisible item with monetary compensation between two agents, where the SPE
is  a unique allocation that would be obtained by a balanced market.
\end{itemize}
I am not aware of SPE implementations of fair allocation of indivisible items without money.

In this paper, the subgame-perfect equilibrium was mainly used as a tool to prove the existence of a CE, so it was sufficient to prove that \emph{there exists} a SPE which is also a CE. In economic terms, the algorithms in this paper provide a \emph{weak implementation} of CE.
A possible direction for future work is finding mechanisms 
in which \emph{every} SPE is a CE, i.e., mechanisms that provide a  \emph{strong implementation} of CE.

Besides subgame-perfect equilibrium, one could try to implement a CE using other solution concepts, such as 
Nash equilibrium (in a direct revelation game), 
self-confirming equilibrium \citep{fudenberg1993self}
%a generalization of Nash equilibrium for extensive form games, in which players correctly predict the moves their opponents make, but may have misconceptions about what their opponents would do at information sets that are never reached when the equilibrium is played
or
rationalizable equilibrium \citep{bernheim1984rationalizable}.
% The general idea is to provide the weakest constraints on players while still requiring that players are rational and this rationality is common knowledge among the players. It is more permissive than Nash equilibrium and self-confirming equilibrium.
%Bernheim, D. (1984) Rationalizable Strategic Behavior. Econometrica 52: 1007-1028.
%Pearce, D. (1984) Rationalizable Strategic Behavior and the Problem of Perfection. Econometrica 52: 1029-1050.
%or 
%iterated elimination of dominated strategies.

\section{Future Work}
\label{sec:future}

\subsection{Restricted preference domains}
In the domains of monotonically-increasing and monotonically-decreasing preferences, 
we now know all the pairs $(n,m)$ for which a CE is guaranteed to exist for almost all incomes.
But for restricted preference domains, the question remains open. In particular,
for $n=2$ or $n=3$ agents with \emph{additive} preferences, when there are $m\geq 5$ goods, it is open whether CE exists for almost all incomes.
Note that the preferences in the negative result of Theorem
\ref{thm:2-agents-5-items} (for $n=2,m=5$) are not additive:
\begin{itemize}
\item For Alice, we had $xyz \succ vxy,vxz,vyz,wxy,wxz,wyz$; with additive preferences, this implies that each of $x,y,z$ is preferred to each of $v,w$, which implies $xy \succ vw$, which implies $xyz \succ vw$; but we had $vw \succ xyz$ --- a contradiction.
\item For Bob, we had $vx,vy,vz,wx,wy,wz \succ vw$; 
with additive preferences, this implies that each of $x,y,z$ is preferred to each of $v,w$.
But we had $vxy,vxz,vyz,wxy,wxz,wyz \succ xyz$, which implies the opposite.
\end{itemize}
\cite{Babaioff2017Competitive} provide several sufficient conditions for existence in the case of $2$ agents  with additive preferences, but a general solution is still not known.
Theorem 
\ref{thm:duality}
implies that, whenever a CE exists for two agents with additive preferences, the same is true for both goods and chores.

\subsection{Finding pixeps automatically}
\label{sub:auto-pixep}
The pixeps in this paper were found manually. It may be useful to build a program for automatically finding pixeps in domains in which the existence of CE is not settled yet (for example, two agents with additive preferences).
When the number of items is sufficiently small, it may be possible to check all picking-sequences, for each one of them calculate prices that satisfy the heuristics (H1,H2,H3) for some subset of the income-space,  and finally check whether the entire income-space is covered.%
\footnote{
A partial implementation of such a program, for the special case of two agents with identical additive preferences, can be found here:
https://github.com/erelsgl/indivisible-competitive-equilibrium
}

Unfortunately, while such a program may help to prove the existence of CE, it cannot be used to prove non-existence. As shown in the following example, some competitive equilibria cannot be a SPE of a pixep satisfying the decreasing-prices heuristic.
\begin{example}
There are 4 goods and 2 agents with incomes $(a,b)=(8,7)$. The agents' preferences contain the following relations:
\begin{itemize}
\item Alice: $wx \succ wy \succ wz \succ xy \succ xz \succ yz$;
\item Bob:   $xy \succ xz \succ yz \succ wx \succ wy \succ wz$.
\end{itemize}
Consider the allocation $(A,B) = (wz,xy)$. It is a CE, for example, with price-vector $(p_w,p_x,p_y,p_z) = (6,4,3,2)$.
Note that $p_x > p_z$ and $p_y > p_z$. This is necessary for a CE, since otherwise Alice could afford $wx$ or $wy$, which she prefers to $wz$.
Note also that $p_w > p_x$ and $p_w > p_y$. This is also necessary for a CE in which agents exhaust their incomes, since $p_w = a - p_z > b - p_y = p_x$ and similarly $p_w > p_y$. 

If such a CE could be a SPE of a pixep with decreasing-prices, then the sequence should be ABBA.
But the allocation $(wz,xy)$ is not a subgame-perfect equilibrium of this sequence: in SPE, Alice picks $x$ first, since then Bob picks $yz$ and Alice gets $wx$, which she prefers to $wz$.
\qed
\end{example}
%If we drop the decreasing-prices requirement, then the allocation $(wz,xy)$ can be implemented in SPE by a pixep with a different sequence: BBAA.
%Is this possible for any CE allocation?
Thus, the challenge of automatically solving the question of CE existence for almost incomes remains open.

\begin{acknowledgements}
I am grateful to 
Inbal Talgam-Cohen, Noam Nisan, Moshe Babaioff, Fedor Sandomirskiy, lonza leggiera, 
%https://math.stackexchange.com/a/3316899/29780
Wolfram, and
three anonymous reviewers of AAMAS 2018 for their very helpful comments. Special thanks are due to the anonymous referees of JAAMAS for their extraordinarily helpful reviews.
\end{acknowledgements}

\newpage

\appendix
\section{Relations Between Different CE Fairness Properties}
\label{app:fairness}
The CE fairness properties \eqref{eq:1} 
and \eqref{eq:2} imply 
%implies 
many different fairness conditions that are satisfied by a CE allocation. This appendix analyzes the relations between different such conditions: which of them are independent and which of them are implied by others.

As in Section \ref{sec:fairness}, there is a fixed allocation $\mathbf{X}$ and a specific agent, Alice, with income $a$, preference-relation $\succeq$, and bundle $A$.

\subsection{Maximin share with a bounded number of items}
\label{sub:indep-items}
The following lemma shows that, 
the number of ``interesting'' instantiations of \eqref{eq:1} and \eqref{eq:2} is finite, even though the number of fractions $l/d$ satisfying their left-hand side is infinite. 

\begin{lemma}
\label{lem:mms-bundlesize}
For every integers $h\geq 0$ and $1\leq l\leq d$, If $|X| \leq  d$ then:
$
\maxmin{X}{l}{d}
\succeq
\maxmin{X}{l+h}{d+h}.
$
\end{lemma}
\begin{proof}
Let $W' := \maxmin{X}{l+h}{d+h}$.
By definition of the maximin share, 
there exists a partition $\mathbf{Y'}\in \partition{X}{d+h}$,
such that $W' = \min_\succ \union{\mathbf{Y'}}{l+h}$.

Since $|X|\leq d$, at most $d$ parts in $\mathbf{Y'}$ are non-empty.
Consider a new partition $\mathbf{Y}\in\partition{X}{d}$,
which contains all non-empty parts in $\mathbf{Y'}$.
Then, each bundle $W\in \union{\mathbf{Y}}{l}$
is also contained in $\union{\mathbf{Y'}}{l+h}$, so
$W\succeq W'$. 
This is true, in particular, when $W$ is the minimum bundle in $\union{\mathbf{Y}}{l}$, which is by definition $\maxmin{X}{l}{d}$.
\qed
\end{proof}
This means that, to check whether an allocation is CE-fair, it is sufficient to check instantiations of \eqref{eq:1} and 
\eqref{eq:2} in which the denominator of each fraction is at most the number of items in the corresponding bundle. For example, to check \eqref{eq:1} for Alice w.r.t Bob, if $|B|=5$ then it is sufficient to check fractions $l/d$ with $a \geq {l\over d} b$ and $d\in\{1,2,3,4,5\}$, since when $d>5$, the condition $A\succeq \maxmin{B}{l}{d}$ is implied e.g. by the condition $A\succeq \maxmin{B}{l-d+5}{5}$.
In the 
technical report\footnote{{http://arxiv.org/abs/1912.08763}} it is further shown that only a small number of these fractions need to be checked.

\subsection{Independence of \eqref{eq:1} instantiations with different bundles}
\label{sub:indep-p1-bundles}
Consider the instantiations of \eqref{eq:1}
with different bundles on the right-hand side.
Fix a subset of agents $K\subseteq\allagents$ that does not contain Alice, and two integers $l,d$. Then the following two conditions on Alice's income are equivalent:
\begin{align*}
a \geq {l\over d} \sum_{i\in K}t_i 
&& \iff && 
a \geq {l\over l+d} (a+\sum_{i\in K}t_i )
\end{align*}
By \eqref{eq:1}, each of these conditions induces a different fairness condition on Alice's bundle:
\begin{align*}
A \succeq {\maxmin{\left(\bigcup_{i\in K} X_i\right)}{l}{d}}
&& \text{vs.} && 
A \succeq {\maxmin{\left(A\cup \bigcup_{i\in K} X_i\right)}{l}{l+d}}
\end{align*}
The following lemma shows that, in general, these two fairness conditions are  independent --- none of them implies the other.
Below, $X_K := \left(\bigcup_{i\in K} X_i\right)$.
\begin{lemma}
For every subset $K\subseteq \allagents$ that does not contain Alice, and for every two integers $l,d$ with $1\leq l\leq d/2$:

(i) There exists an allocation $\mathbf{X}$ and a preference-relation $\succ$ by which
\begin{align*}
{\maxmin{\left(A\cup X_K\right)}{l}{l+d}} 
~\succ~ A ~\succeq~
{\maxmin{\left(X_K\right)}{l}{d}}
\end{align*}

(ii) There exists an allocation $\mathbf{X}$ and a preference-relation $\succ$ by which 
\begin{align*}
{\maxmin{\left(X_K\right)}{l}{d}} 
~\succ~ A ~\succeq~
{\maxmin{\left(A\cup X_K\right)}{l}{l+d}}
\end{align*}
\end{lemma}
\begin{proof}
Both parts are proved using additive preference-relations.
For simplicity, the examples have different items with the same value; it is easy to make the preferences strict by adding a very small ``noise'' $\epsilon_j>0$ to each item $j$. This does not affect the proof arguments.

\emph{(i)} 
Consider any allocation in which Alice's bundle $A$ contains $l+d$ items that she values at $2 l$,
and $X_K$ contains $l+d$ items that Alice values at $2 d + 1$.

The union $A\cup X_K$ 
can be partitioned into $l+d$ pairs, each of which is valued by Alice at $2l+2d+1$. Hence, Alice values the bundle ${\maxmin{\left(A\cup X_K\right)}{l}{l+d}}$ 
at $l\cdot (2l+2d+1) = 2l\cdot d + 2l\cdot l+ l$.

Alice values her own bundle at $2l\cdot (l+d) = 2l\cdot d + 2l\cdot l$.

Since $l\leq d/2$, any partition of $X_K$ into $d$ parts contains at least $l$ parts with at most a single item. Therefore, Alice values the bundle ${\maxmin{\left(X_K\right)}{l}{d}}$ at $l\cdot(2d+1) = 2l\cdot d+l$.

Since $2l\cdot d + 2l\cdot l+ l ~>~ 2l\cdot d + 2l\cdot l ~\geq~ 2l\cdot d+l$, the claim is proved.

\emph{(ii)} 
Consider any allocation in which Alice's bundle $A$ contains a single item that she values at $1$,
and $X_K$ contains $d$ items that Alice values at $2$.

Alice values the bundle ${\maxmin{\left(X_K\right)}{l}{d}}$ at $2 l$. She values her own bundle at $1$.
The union $A\cup X_K$ contains $d+1$ items, so in any partition of it into $l+d$ parts, there are at least $l-1$ empty parts.
Hence, the $l$ lowest-valued parts contain at most $1$ non-empty part, and if such part exists, it must be Alice's item which she values at $1$. Hence, Alice values the bundle ${\maxmin{\left(A\cup X_K\right)}{l}{l+d}}$ at $1$.
Since $2l > 1 \geq 1$, the claim is proved. \qed
\end{proof}

\subsection{Independence of \eqref{eq:1} and \eqref{eq:2}}
\label{sub:indep-p1-p2}
Consider the possible instantiations of \eqref{eq:2}. Fix a subset of agents $K\subseteq\allagents$ that does not contain Alice, and two integers $l,d$ with $1\leq l\leq d$. Then the following two conditions on Alice's income are equivalent:
\begin{align*}
a \geq {d\over d+l} \left(a+\sum_{i\in K}t_i \right)
&& \iff && 
a \geq {d-l\over d} a + \sum_{i\in K}t_i 
\end{align*}
The following fairness conditions are implied by \eqref{eq:1} and  \eqref{eq:2} respectively (where again $X_K := \left(\bigcup_{i\in K} X_i\right)$):
\begin{align*}
A \succeq {\maxmin{\left(A\cup X_K\right)}{d}{d+l}}
&& \text{vs.} && 
A \succeq 
\left(\maxmin{A}{d-l}{d}\right)\sqcup X_K
\end{align*}

\begin{lemma}
\label{lem:p2-independent}
For every subset $K\subseteq \allagents$ that does not contain Alice, and for every two positive integers $l,d$ with $d/2\leq l < d$:

(i) There exists an allocation $\mathbf{X}$ and a preference-relation $\succ$ by which
\begin{align*}
\left(\maxmin{A}{d-l}{d}\right)\sqcup X_K
\succ A \succeq 
{\maxmin{\left(A\cup X_K\right)}{d}{d+l}}
\end{align*}

(ii) There exists an allocation $\mathbf{X}$ and a preference-relation $\succ$ by which 
\begin{align*}
{\maxmin{\left(A\cup X_K\right)}{d}{d+l}}
\succ A \succeq 
\left(\maxmin{A}{d-l}{d}\right)\sqcup X_K
\end{align*}
\end{lemma}
\begin{proof}

\emph{(i)} 
Consider an allocation in which Alice's bundle $A$ contains $d$ items that she values at $1$,
and $X_K$ contains one item that Alice values at $l+1$.

The bundle $\left(\maxmin{A}{d-l}{d}\right)\sqcup X_K$ contains arbitrary $d-l$ items from $A$  and the bundle $X_K$, so its value for Alice is $d-l+l+1 = d+1$.
Alice values her own bundle at $d$.
The union $A\cup X_K$ contains only $d+1$ items.
Hence, in the optimal partition of it into $d+l$ parts,
$l-1$ parts are empty and each of the other $d+1$ parts contains exactly one item. 
The bundle ${\maxmin{\left(A\cup X_K\right)}{d}{d+l}}$ contains the $d$ worst items, which Alice values at $1$, so Alice values that bundle at $d$.
Since $d+1 > d \geq d$, the claim is proved.

\emph{(ii)} 
Consider an allocation in which Alice's bundle $A$ contains 
$d+l$ items that she values at $d+l$,
and $X_K$ contains $d+l$ items that Alice values at $2l$.

The union $A\cup X_K$ 
can be partitioned into $d+l$ pairs, each of which is valued by Alice at $d+l+2l$. Hence, Alice values the bundle ${\maxmin{\left(A\cup X_K\right)}{d}{d+l}}$ 
at $d\cdot (d+3l) = d\cdot(d+l) + 2d\cdot l$.

Alice values her own bundle at $(d+l)\cdot (d+l) = d\cdot(d+l) + (d+l)\cdot l$.

The condition $d/2\leq l$ implies $d-l\leq d/2$. 
So any partition of $A$ into $d$ parts 
contains at  least $d-l$ parts with at most a single item. Therefore, Alice values the bundle $\left(\maxmin{A}{d-l}{d}\right)\sqcup X_K$ 
at 
$(d-l)\cdot(d+l) + (d+l)\cdot 2 l = (d+l)\cdot (d+l) = d\cdot(d+l) + (d+l)\cdot l$.

Since $d>l$, we have $d\cdot(d+l) + 2d\cdot l > d\cdot(d+l) + (d+l)\cdot l$ and the claim is proved.
 \qed
\end{proof}

Using the construction of Lemma \ref{lem:p2-independent}, it is possible to show that \eqref{eq:1} and \eqref{eq:2} are independent --- none of them implies the other one even for two additive agents.

%The following lemma shows that \eqref{eq:1} and \eqref{eq:2} do not imply each other.
\begin{lemma}
\label{lem:indep-p1-p2}
(i) There is an instance with $n=2$ additive agents such that, for one agent, some allocation satisfies all instantiations of \eqref{eq:1} but violates an instantiation of \eqref{eq:2};

(ii) There is an instance with $n=2$ additive agents such that, for one agent, 
some allocation satisfies all instantiations of \eqref{eq:2} but violates an instantiation of \eqref{eq:1}.
\end{lemma}
\begin{proof}
There are two agents with incomes $a,b$ satisfying:
\begin{align*}
2b < a < 3b
\end{align*}
(e.g. $a=9,b=4$).
Each part of the lemma is proved by a different instance.

\emph{(i)} There are 3 goods: $x,y,z$, and Alice's preferences are:
\begin{align*}
xyz > xy > xz > yz > x > y > z > \emptyset
\end{align*}
(they are additive, for example where $x,y,z$ are valued at $20,15,10$. Bob's preferences are not relevant for the proof).

The allocation is:
\begin{align*}
A = \{y,z\} &&  B = \{x\}
\end{align*}

For \eqref{eq:1}, the relevant bundles in the right-hand side are $B$ and $A\cup B$.
For $B$, 
since $a > b$, \eqref{eq:1} implies that $A \succeq B$, which is satisfied. 
The bundle $A\cup B$ contains three items,  so
 by Lemma \ref{lem:mms-bundlesize} only conditions with $d\in\{1,2,3\}$ are interesting. 
The strongest conditions implied by the incomes for this bundle are $A\succeq \maxmin{(A\cup B)}{1}{2} = \{x\}$ and $A\succeq \maxmin{(A\cup B)}{2}{3} = \{y,z\}$, both of which are satisfied.
However, \eqref{eq:2} is violated, since:
\begin{align*}
a > {1\over 2}a + b
&&\text{while}&&
A\prec \left(\maxmin{A}{1}{2}\right)\sqcup B = \{x,z\}
\end{align*}

\emph{(ii)} There are 6 goods,
$u,v,w,x,y,z$. Alice's preferences are additive and she values the goods at $40,35,30,25,20,15$ respectively. 
The allocation is:
\begin{align*}
A = \{u,v,w\} && B = \{x,y,z\}
\end{align*}
so Alice values $A$ at 105 and $B$ at 60.
For \eqref{eq:2}, the bundles at the right-hand side are $A$ and $B$.
Since each bundle contains three identical items,
by Lemma \ref{lem:mms-bundlesize} only conditions with $d\in\{1,2,3\}$ are interesting, so the relevant MMS bundles  are $\maxmin{A}{1}{3}$,  $\maxmin{A}{1}{2}$, $\maxmin{A}{2}{3}$, $A$, and
$\maxmin{B}{1}{3}$, $\maxmin{B}{1}{2}$, $\maxmin{B}{2}{3}$, $B$. By checking all 16 combinations of these bundles, 
it can be seen that the strongest conditions implied by the incomes (since $a < {2\over 3}a + b$) are:
\begin{align*}
A \succeq 
\left(\maxmin{A}{1}{2}\right) \sqcup B = 
\{u,x,y,z\}
&&
A \succeq 
\left(\maxmin{A}{2}{3}\right) \sqcup \left(\maxmin{B}{2}{3}\right)
=
\{v,w,y,z\}
\end{align*}
and both of them are satisfied (Alice values both bundles at 100).
However, \eqref{eq:1} is violated, since $a > 2b$ implies:
\begin{align*}
a > {2\over 3}(a + b)
&&\text{while}&&
A\prec \maxmin{(A\cup B)}{2}{3} = \{u,v,y,z\}
\end{align*}
(Alice values that bundle at 110).
\qed
\end{proof}

\section{CE-fairness and Pareto-efficiency do not imply CE}
\label{sec:ce-fair-not-ce}
Since properties \eqref{eq:1} and \eqref{eq:2} imply so many different fairness conditions, one could think that the combination of all these fairness conditions, together with Pareto-efficientity, is equivalent to CE. However, the following lemma shows that this is not true.
\begin{lemma}
There exists an instance with $n=2$ agents and $m=4$ goods, a subset of the income-space with a positive measure, and an allocation $\mathbf{X}$, 
such that:

(i) $\mathbf{X}$ is Pareto-efficient;

(ii) $\mathbf{X}$ satisfies \eqref{eq:1} for both agents, for every $K\subseteq \allagents$ and all integers $l,d$;

(iii) $\mathbf{X}$ satisfies  \eqref{eq:2} for both agents, for every combinations of integers $l_1,l_2,d_1,d_2$.

(iv) $\mathbf{X}$ is not a CE allocation for any price-vector.
\end{lemma}

\begin{proof}
There are two agents with incomes $a,b$ satisfying:
\begin{align*}
2a/3 < b < a
\end{align*}
which is equivalent to:
\begin{align*}
2(a+b)/5 ~<~ b ~<~ (a+b)/2 ~<~ a ~<~ 3(a+b)/5
\end{align*}
(e.g. $a=9,b=7$).
The agents have identical preferences containing the following relations:
\begin{align*}
wx > wy > wz > xy > xz > w > yz > x > y > z
\end{align*}
(they are additive, for example where $w,x,y,z$ are valued at $8,6,4,3$).
The allocation is:
\begin{align*}
A = \{w,x\} &&  B = \{y,z\}
\end{align*}

\emph{(i)} Since the preferences are identical, any allocation is Pareto-efficient. 

\emph{(ii)} 
For Alice, the relevant bundles in the right-hand side of \eqref{eq:1} are $B$ and $A\cup B$.
For $B$, \eqref{eq:1} is satisfied since $A \succeq B$.
The bundle $A\cup B$ contains four items, 
so by Lemma \ref{lem:mms-bundlesize} only conditions with $d\in\{1,2,3,4\}$ are interesting. 
%The algorithm in the previous subsection 
%yields only the pair $d=2,l_d=1$ (since $a > {1\over 2}(a+b)$);
%all other pairs are filtered out. 
Since $a < {1\over 1}\cdot(a+b)$ and $a < {2\over 2}\cdot(a+b)$ and $a < {2\over 3}\cdot(a+b)$ and $a < {3\over 4}\cdot(a+b)$, the relevant instantiations of \eqref{eq:1}  are:
\begin{align*}
A \succeq \maxmin{\left(A\cup B\right)}{1}{2} = \{x,y\}
&&
A \succeq \maxmin{\left(A\cup B\right)}{1}{3} = \{x\}
&&
A \succeq \maxmin{\left(A\cup B\right)}{2}{4} = \{y,z\}
\end{align*}
all of which are satisfied.

For Bob, the relevant bundles at the right-hand side of \eqref{eq:1} are $A$ and $A\cup B$.
Since $A$ contains 2 items, by Lemma \ref{lem:mms-bundlesize} only conditions with $d\in\{1,2\}$ are interesting. 
Since $b<{1\over 1}a$ and $b<{2\over 2}a$,
the only relevant con condition is $B \succeq \maxmin{A}{1}{2} = \{x\}$, which is satisfied.
For the bundle $A\cup B$, 
since $b < {1\over 2}\cdot(a+b)$ and $b < {2\over 3}\cdot(a+b)$ and $b < {2\over 4}\cdot(a+b)$, the relevant instantiations of \eqref{eq:1}  are:
%the algorithm yields only the pair $d=3,l_d=1$ 
%(since $b > {1\over 3}(a+b)$).
%Thus, the relevant instantiations of \eqref{eq:1}  for Bob are:
\begin{align*}
B \succeq \maxmin{\left(A\cup B\right)}{1}{3} = \{x\}
&&
B \succeq \maxmin{\left(A\cup B\right)}{1}{4} = \{z\}
\end{align*}
both of which are satisfied.

\emph{(iii)}
The bundles in the right-hand side of \eqref{eq:2} are $A$ and $B$, and each of them has 2 items. By Lemma \ref{lem:mms-bundlesize}, the only interesting values of $d_1,d_2$ are $1$ and $2$.

For Alice, 
since $a < {1\over 2}a + {1\over 1}b$,
the only relevant condition is
$
a > {1\over 2}a + {1\over 2}b 
$
which implies:
\begin{align*}
A \succeq \left(\maxmin{A}{1}{2}\right)\sqcup\left(\maxmin{B}{1}{2}\right) = \{x,z\}
\end{align*}
which is satisfied.
For Bob, since $
b < {1\over 2}a + {1\over 2}b$, \eqref{eq:2} does not lead to any non-trivial condition.

\emph{(iv)}
Any CE price $\mathbf{p}$ must satisfy 
$p(w) > b$, since Bob prefers $w$ to his bundle. 
since $p(A)\leq a$, it must satisfy $p(x) < a-b$. 
On the other hand, since $p(B)\leq b$, 
either $p(y)$ or $p(z)$ is at most $b/2$.
Hence, either $p(x y)$ or $p(x z)$ is at most $a-b+b/2 = a-b/2 < (3b/2)-b/2 = b$.
So Bob can afford either $x y$ or $x z$, both of which he prefers to his current bundle.
%
% NOTE: the same proof shows that this allocation is not even a personal-price-CE for Bob (i.e., there is no price-vector p_B such that, in Bob's eyes, (X, p_B) is a CE).
%
\footnote{
A property that captures this issue is: ``if $b > (a-b) + b/2$, then for every partition of $A$ into two parts $A_1\cup A_2$,
Bob prefers $B$ either to $A_1$, or to $A_2 \cup \left(\maxmin{B}{1}{2}\right)$''. This condition is much less natural than properties \eqref{eq:1} and \eqref{eq:2}, and it is not clear how to generalize it.
}
\qed
\end{proof}

\section{Existence of CE: Alternative Definitions}
\label{sec:alternative-defs}
The existence results in most of the paper were based on Definition \ref{def:ce-exists-in-n-m}, which requires that:

\begin{quote}
(*)
For \emph{almost all} income-vectors, for \emph{all} preference-profiles, a CE exists. 
\end{quote}
By changing the quantifiers, one could come up with several alternative definitions:
\begin{quote}
(**)
For \emph{all} income-vectors, for \emph{almost all} preference-profiles, a CE exists. 

(***) 
For \emph{all} preference-profiles, for \emph{almost all} income-vectors, a CE exists. 

(****) 
For \emph{almost all} preference-profiles, for \emph{all} income-vectors, a CE exists. 
\end{quote}

Definitions (**) and (****) are both equivalent to ``for \emph{all} income-vectors, for \emph{all} preference\-/profiles, a CE exists''. This is because the space of preference-profiles is discrete and finite, so even a single preference-profile constitutes a positive fraction of this space. Therefore they cannot be guaranteed even for two agents and one item.

Definition (*) clearly implies (***).%
\footnote{
This is a general fact.
For every predicate $P(x,y)$, 
the statement ``For-almost-all $x\in X$ for-all $y\in Y$ $P(x,y)$''
implies ``for-all $y\in Y$ for-almost-all $x\in X$ $P(x,y)$''.
}
At first glance, definition (***) is weaker than (*), since it allows the set of excluded incomes (for which CE is allowed to not exist) to depend on the preference-profile.
For example, consider the following hypothetic claim: ``If Alice prefers $x$ to $y$, then CE exists whenever $a \neq b$; if Alice prefers $y$ to $x$, then CE exists whenever $a\neq 2 b$''. Formally, this claim fits definition (***) but not definition (*).

However, because the number of preference-profiles is finite, the two definitions are in fact equivalent: for each preference-profile, the set of excluded incomes has zero measure, and the union of a finite number of such sets still has zero measure. For example, the above claim implies that ``CE exists whenever $a \neq b$ and $a\neq 2 b$, for all preference-profiles'', which formally fits definition (*).%
\footnote{
This is also a general fact.
If the set $Y$ is finite, then 
the statement 
``For-almost-all $x\in X$ for-all $y\in Y$ $P(x,y)$''
is equivalent to
``for-all $y\in Y$ for-almost-all $x\in X$ $P(x,y)$''.
}

\section{Efficiency of CE with strict and non-strict preferences}
\label{sec:indifferences}
An allocation is called 
\emph{Pareto-efficient (PE)} if no other allocation is weakly-preferred by all agents and strictly-preferred by at least one agent.
Let $(\pricevector,\mathbf{X})$ be a CE. 

With strict preferences, $\mathbf{X}$ is PE. 
\emph{Proof}:
Suppose for contradiction that a different allocation $\mathbf{Y}$ is strictly preferred by some subset of agents $J\subseteq\allagents$.
CE condition 2 on the CE $(\pricevector,\mathbf{X})$ implies that $\forall j\in J: p(Y_j)>p(X_j)$.
Moreover, all other agents are indifferent between 
$\mathbf{Y}$ and $\mathbf{X}$. Since the preferences are strict, this implies that $\forall i\in\allagents\setminus J: Y_i = X_i$, so $p(Y_i)=p(X_i)$.
Hence $p(\cup_{i\in\allagents}Y_i) > p(\cup_{i\in\allagents}X_i)$.
But this is impossible as 
$\cup_{i\in\allagents}Y_i = \cup_{i\in\allagents}X_i = \allitems$.

With indifferences, 
$\mathbf{X}$ may not be PE.
For example, suppose there are two goods $x,y$, and two agents: Alice with income $2/3$ and Bob with income $1/3$.   Alice is indifferent between $x$ and $y$, while Bob strictly prefers $y$.   Consider the allocation in which $y$ is priced at $2/3$ and given to Alice, and $x$ is priced at $1/3$ and given to Bob. This is a CE, but it is not Pareto-efficient because of the allocation in which $x$ is given to Alice and $y$ is given to Bob.

An allocation is called \emph{weakly-Pareto-efficient (WPE)} if no other allocation is strictly preferred by all agents. 
Every CE allocation is WPE even with  indifferences.
\emph{Proof}:
Suppose for contradiction that a different allocation $\mathbf{Y}$ is strictly preferred by all agents. 
CE condition 2 on the CE $(\pricevector,\mathbf{X})$ implies that $p(Y_i)>p(X_i)$ for all $i\in\allagents$.
Hence $p(\cup_{i\in\allagents}Y_i) > p(\cup_{i\in\allagents}X_i)$.
But this is impossible as 
$\cup_{i\in\allagents}Y_i = \cup_{i\in\allagents}X_i = \allitems$.

\section{Is a CE scarcer when there are more agents?}
\label{sec:more-agents}

Intuitively, one could think that, in  instances with more agents, a CE is ``scarcer'' ---  less likely to exist. But this intuition is not always true. 

\begin{example}
Consider an instance with three goods and three agents: Alice Bob and Carl, with incomes $a = b > c > 0$. The preference-profile contains the following relations:

\begin{itemize}
\item     Alice:  $xy \succ yz \succ xz \succ x \succ y \succ z\succ \emptyset$.
\item     Bob:~  $xy \succ xz \succ yz \succ y \succ x \succ z\succ \emptyset$.
\end{itemize}
The allocation Alice:$x$, Bob:$y$, Carl:$z$ with price-vector $(a,b,c)$ is a CE. 

However, if Carl leaves then there is no CE, since either Alice or Bob gets a single item and envies the other agent who gets two items.

Moreover, even if Carl, upon leaving, gives his income and his item to one of the other agents, the result is still not a CE. For example, the allocation Alice:$xz$, Bob:$y$ where Alice's income is $a+c$ and Bob's income is $b$ is not a CE, since with this larger income, Alice can afford $yz$, which she prefers to $xz$.
\end{example}

The above example is a knife-edge case since $a=b$.
It could still be the case that, for \emph{almost all} income-vectors,
CE is scarcer in instances with more agents. Formally, one could make the following conjectures (analogous to Lemma \ref{lem:more-items}):

\begin{conjecture}
\label{cnj:more-agents}
Let $n\geq 2$ and $m\geq 2$ be integers.

(a)
If a CE exists with $n+1$ agents, for almost-all
income-vectors in $\allincomesplus$
and all preference-profiles in $\allpreferencesplusn$,
then a CE exists with $n$ agents, for almost-all
income-vectors in $\allincomes$
and all preference-profiles in $\allpreferences$.

(b)
If a CE-fair allocation exists with $n+1$ agents, for almost-all
income-vectors in $\allincomesplus$
and all preference-profiles in $\allpreferencesplusn$,
then a CE-fair allocation exists with $n$ agents, for almost-all
income-vectors in $\allincomes$
and all preference-profiles in $\allpreferences$.
\end{conjecture}

So far I could not prove any of these conjectures.

\section{Proof Comparison}
\label{sec:proof-comparison}
This section compares two proofs to the corresponding proofs of \citet{Babaioff2017Competitive}.

\subsection{Theorem \ref{thm:3-items} --- existence of CE with 3 goods}
\citet{babaioff2019competitive}~[Proposition 3.1] 
prove this theorem by partitioning the income-space into five sub-spaces (instead of two). 
While they do not use pixeps, their algorithm for finding a CE in these sub-spaces can be, approximately, presented by the following pixeps:
\begin{enumerate}
\item If ~~ ${a > 3 b}$ ~~ then ~~ 
${\pick{A}{a/3}~~~\pick{A}{a/3}~~~~~~~~~\pick{A}{a/3}}$~~~~.
\\[3mm]
\item If ~~ ${3 b \geq a > 2 b}$ ~~ then ~~ 
${\pick{A}{a/2}~~~\pick{A}{a/2}~~~~~~~~~\pick{B}{b}}$~~~~.
\\[3mm]
\item If ~~ ${2b \geq a > b+c}$ ~~ then ~~ 
play the sequential game below:
\begin{align*}
&\text{Bob may choose:} && {\pick{A}{a/2}~~~\pick{A}{a/2}~~~~~~~~~\pick{B}{b}}~~~~.
\\[3mm]
&\text{Else:} && 
{\pick{A}{b^+}~~~\pick{B}{b}~~~~~~~~~\pick{A}{(a-b)^-}}~~~~.
\end{align*}
\item If ~~ ${a = b+c}$ ~~ then ~~ 
play the sequential game below:
\begin{align*}
&\text{Alice may choose:} && {\pick{A}{a}~~~\pick{B}{b}~~~~~~~~~\pick{C}{c}}~~~~.
\\[3mm]
&\text{Else:} && 
{\pick{B}{b}~~~\pick{A}{a/2}~~~~~~~~~\pick{A}{a/2}}~~~~.
\end{align*}
\item If ~~ ${b + c > a}$ ~~ then ~~ 
${\pick{A}{a}~~~\pick{B}{b}~~~~~~~~~\pick{C}{c}}$~~~~.
\end{enumerate}
~\\
The income-subspaces of steps 1, 2, 3, 5 are covered by the two steps of Algorithm \ref{alg:3 items}, but step 4 handles the set of incomes with $a=b+c$, which is not covered by Algorithm \ref{alg:3 items}. I omitted it since it has a measure of zero, so it is not required for proving the existence of CE for almost all incomes.

Here is a proof that step 4 indeed implements a CE. 
First, note that the decreasing-prices condition is satisfied since $a>b>c$ and $2b > b+c = a$.
By Lemma \ref{lem:dominated} no agent wants a dominated bundle.
Bob has an unrelated bundle only in the second pixep, and he cannot afford it.
It remains to check the unrelated bundles of Alice.
Rename the items such that Alice's best item is $x$.
If Bob's best item is not $x$, then Alice certainly plays $BAA$, since then she gets $x$ plus another item. She obviously prefers this bundle to her only unrelated bundle, which is Bob's single item.
If Bob's best item is $x$, then Alice effectively chooses between $x$ (the $ABC$ sequence) and $yz$ (the $BAA$ sequence). Whatever she chooses, she does not want the other (unrelated) bundle.

\subsection{Theorem \ref{thm:2-agents-5-items} --- non-existence of CE with 5 goods}
\citet{babaioff2019competitive}~[Theorem 3.5] 
prove this theorem using a very similar example. With the transformation $A\to v, B\to w, C\to x, D\to y, E\to z$, the preferences implied by the numeric values in their Example 3.1 are:
\begin{itemize}
\item Alice:    $vwz > vwy > vwx > vw > xyz > zyw > zxw > yxw > zyv > zxv > yxv > zy > zx > yx > zw > yw > xw > zv > yv > xv > z > y > x > w > v$.
\item Bob:   $wzy > wzx > wyz > vzy > vzx > vyx > zwv > ywv = zw > xwv = yw = zv > xw = yv > xv > xyz > vw > w > v > zy > zx > yx > z > y > x$.
\end{itemize}
The preferences are very similar, but they contain some indifferences between bundles (which do not affect the proof in any way).   I just expressed the preferences as relations and without numbers, and also did not specify some relations that are irrelevant for the proof.

The main differences between the proofs are:
(a) the proof in this paper uses only the fairness properties, so the impossibility result is stronger --- it shows that even the fairness properties alone cannot be satisfied;
(b) the proof in this paper is extended to any number $n \geq 2$ of agents.

\section{Allocations in Different Subgame-perfect Equilibria}
\label{sec:spe}
As noted in Section \ref{sec:3items}, some pixeps may have several different subgame-perfect equilibria, 
where only one of these SPE leads to a competitive equilibrium.  
Currently I do not have a general algorithm for determining the SPE that leads to a CE. 
However, the following lemma shows that, with strict preferences, the SPE selection affects only the prices and not the allocation.
\begin{lemma}
\label{lem:spe}
Suppose all agents have strict preferences.
Then for any two subgame-perfect equilibria $Q_1,Q_2$ of the same pixep $S$, the allocations $\mathbf{X}(S,Q_1)$ and $\mathbf{X}(S,Q_2)$ are the same.
\end{lemma}
\begin{proof}
The proof is by induction on $n$. For $n=1$ the claim is trivial. For $n>1$, assume the claim for all pixeps with $n-1$ agents, and consider two SPE of a pixep with $n$ agents. Suppose w.l.o.g. that 
the first choice that differs between $Q_1$ and $Q_2$ is made by Alice. If Alice's bundle in $Q_1$ differs from her bundle in $Q_2$, then (by strictness of preferences) she prefers one of these bundles; but then the other one cannot be a SPE.
So Alice must get the same bundle in $Q_1$ and $Q_2$. Therefore,  we can remove Alice from the sequence and remove her bundle from the set of items and get a new pixep with $n-1$ agents; by the induction assumption, all these agents have the same  bundle in both equilibria. \qed
\end{proof}
Lemma \ref{lem:spe} implies, in particular, that the resulting SPE allocation is CE-fair, since CE-fairness depends only on the allocation; the price-vector is used only as an evidence that the allocation is a CE.

In general, Lemma \ref{lem:spe} 
does not hold for a sequential choice among several pixeps.
For example, in the game ``Alice chooses between $AB$ and $AC$'', there are two SPEs with substantially different allocations.
However, the lemma does hold in the specific sequential games in Algorithms \ref{alg:4 items 2 agents} and \ref{alg:4 items 3 agents}:
\begin{itemize}
\item In Algorithm \ref{alg:4 items 2 agents} there are only two agents: if Alice's bundle is the same in both SPEs then obviously Bob's bundle is the same too.
The same is true in Algorithm \ref{alg:4 items 3 agents} Range 4.
\item In Algorithm \ref{alg:4 items 3 agents} Range 5, Alice cannot be indifferent between her two choices, since in each choice she gets a different bundle (a singleton vs. a pair), and by assumption the preferences are strict.
The same is true in Range 7.
In Range 6, too, Alice's choice is between a singleton and a pair, and Bob's choice is between a pair and a singleton, so they cannot be indifferent. This means that a unique pixep is played in every SPE.
\end{itemize}

Lemma \ref{lem:spe}  does not hold with non-strict preferences. For example, suppose there are two items $x,y$, Alice is indifferent between them while Bob strictly prefers $y$. Then the picking-sequence $A B$ has two SPEs --- one in which Alice picks $x$ and one in which she picks $y$ --- and they lead to two different allocations that are substantially different for Bob.

\newpage

\end{document}